\documentclass[12pt,a4paper,oneside]{article}
\usepackage[english]{babel}     
\usepackage[latin1]{inputenc}   
\usepackage[T1]{fontenc}
\usepackage{geometry}
\usepackage{amsmath}
\usepackage{bm}
\usepackage{bbm}
\usepackage{amsfonts}
\usepackage{url}
\usepackage{rotating}           
\usepackage{longtable}          
\usepackage{multirow}           
\usepackage{multicol}           
\usepackage{booktabs} 
\usepackage{cancel}
\usepackage{nicefrac}           
\usepackage[authoryear]{natbib}
\usepackage{setspace}
\usepackage[title]{appendix}

\usepackage{verbatim}           
\usepackage{graphicx}
\usepackage{lscape}

\usepackage{wasysym}            
\usepackage{marvosym}           
\usepackage{amssymb}            
\usepackage{amsthm}
\usepackage{pifont}
\usepackage{fancyvrb}
\usepackage{tikz}
\usepackage{csquotes}	
\usepackage{booktabs}
\usepackage[flushleft]{threeparttable}	

\usepackage{pbox}
\usepackage{longtable}

\usepackage{chngcntr}
\counterwithin{figure}{section}
\counterwithin{table}{section}
\counterwithin{equation}{section}

\usepackage{float}
\restylefloat{table}
\restylefloat{figure}
\usepackage{float}
\restylefloat{table}

\usepackage{chngcntr}
\setlength\parindent{0pt}

\counterwithout{figure}{section}
\counterwithout{table}{section}

\usepackage[colorlinks, allcolors=blue]{hyperref}

\usepackage{subfigure}
\usepackage{diagbox}
\usepackage{algorithm} 
\usepackage{algpseudocode} 
\makeatletter
\newcommand{\algmargin}{\the\ALG@thistlm}
\makeatother
\newlength{\whilewidth}
\settowidth{\whilewidth}{\algorithmicwhile\ }
\algnewcommand{\parState}[1]{\State%
	\parbox[t]{\dimexpr\linewidth-\algmargin}{\strut #1\strut}}

\onehalfspacing

\author{Helmut Farbmacher, Rebecca Groh, Michael M\"uhlegger, Gabriel Vollert\thanks{farbmacher@tum.de} \\ \small Technical University of Munich, Germany \vspace{0.3cm}}

\title{Revisiting the Many Instruments Problem using Random Matrix Theory}

\newcommand{\lam}{(-\lambda)}

\newcommand{\ZZlam}{(Z'Z/n+\lambda I_k)}
\newcommand{\ZZlamout}{(ZZ'/n+\lambda I_n)}

\newcommand{\E}{\textrm{E}}
\newcommand{\V}{\textrm{Var}}

\newtheorem{theorem}{Theorem}[section]

\newtheorem{corollary}{Corollary}[theorem]
\newtheorem{lemma}[theorem]{Lemma}

\theoremstyle{definition}

\newtheorem{assumption}{Assumption}

\renewenvironment{proof}{{\textit{Proof}.}}{\hfill\qedsymbol \bigskip}
\usepackage{amsmath}

\DeclareMathOperator*{\argmin}{arg\,min}
\DeclareMathOperator{\tr}{tr}
\usepackage{amssymb}

\begin{document}

\newcommand*{\thisdraft}{This version: \today}
\newcommand*{\firstdraft}{\small{(First version: August 15, 2024)}}

\date{\thisdraft \\ \firstdraft}

\maketitle
	
\begin{abstract}	
	\noindent	
	Instrumental variables estimation with many instruments is biased. Traditional bias-adjustments are closely connected to the Silverstein equation. Based on the theory of random matrices, we show that Ridge estimation of the first-stage parameters reduces the implicit price of bias-adjustments. This leads to a trade-off, allowing for less costly estimation of the causal effect which comes along with improved asymptotic properties. Our theoretical results nest existing ones on bias approximation and adjustment with ordinary least-squares in the first-stage regression and, moreover, generalize them to settings with more instruments than observations. Finally, we derive the optimal tuning parameter of Ridge regressions in simultaneous equations models, which comprises the well-known result for single equation models as a special case with uncorrelated error terms.
\end{abstract}
	
	
\newpage
	
\section{Introduction}

In this paper, we propose a new bias-adjusted two-stage least-squares (2SLS) Ridge estimator for a simultaneous equations model with a single endogenous regressor and many instrumental variables (IV). Our motivation comes from recent results in the theory of random matrices (RMT), which give a new understanding of the bias induced by many instruments. 

It is well-known that IV estimators perform poorly under many and/or weak instruments. In these settings, IV estimators are biased towards the ordinary least-squares (OLS) estimate \citep[e.g.,][]{bound1995problems,staiger1997instrumental,stock2000gmm,chao2005consistent}. The standard IV estimator is a 2SLS procedure, where we first use least-squares to predict the endogenous variable and use these predictions in a second least-squares regression to estimate the causal effect. Thus, the accuracy of 2SLS estimation depends crucially on the prediction quality in the first-stage regression. We would expect a method that reduces the predictive risk in the first stage to perform better in the second-stage regression. This reasoning motivated a series of papers to replace OLS in the first stage with statistical learning tools (such as \cite{belloni2012sparse}, \cite{gold2020inference} and \cite{angrist2022machine} for Lasso regressions or \cite{okui2011instrumental} and \cite{hansen2014instrumental} for Ridge regressions). In this paper, we are interested in settings with dense first-stage parameters such that Ridge is the preferred choice in the first-stage regression. 

Bias-adjusted 2SLS estimation can be motivated by the Silverstein equation. This link to random matrix theory allows us to improve IV estimation even in settings where no closed-form expression of the Silverstein equation exists. We first consider a standard many IV setting where the number of instruments is large relative to, but still smaller than, the sample size. In this setting, our bias approximation nests the well-known results for least-squares in the first stage \citep[e.g.,][]{hahn2002new,hahn2002notes,bun2011comparison}, and a special case of our bias-adjustment coincides with the well-known bias-adjusted 2SLS estimator \citep[][]{nagar1959bias,chao2005consistent,donald2001choosing}. Bias-adjustments are costly in the sense that they reduce the overall signal strength of the instruments. We show that using Ridge in the first stage reduces the implicit price of bias-adjustments. This leads to a trade-off, which allows for less costly estimation of the causal effect parameter and comes along with improved asymptotic properties in the second-stage regression. Additionally, the results mentioned above generalize to settings where the number of instruments is larger than the number of observations, which has yet to be discussed in the IV literature. Finally, we derive the optimal tuning parameter of Ridge regressions in simultaneous equations models, which comprises the well-known results for single equation models as special case with uncorrelated error terms.

To derive our results, we make the following assumptions. We assume that each instrument has only a small, independent random effect on the endogenous variable. This can be considered an average-case analysis over dense parameters. \cite{chamberlain2004random} also use a random effects assumption to derive a quasi-maximum likelihood estimator in an instrumental variables regression with many instruments. Moreover, we assume a high-dimensional asymptotic regime for the instrumental variables. In such a setting, we would expect Ridge to outperform OLS with respect to prediction risk \citep[e.g.,][among others]{dobriban_wager}. While being standard in the RMT literature, these assumptions are restrictive. However, they allow us to derive insightful results that nest existing approaches and may serve as a basis for future conceptual developments.

Closest to our study in the RMT literature are \cite{dobriban_wager} and \cite{hastie2022surprises} who analyze the performance of Ridge in a single equation model. In the econometrics literature, our study is close to \cite{hahn2002new, hahn2002notes} and \cite{bun2011comparison}, who analyze the bias of 2SLS under many and/or weak instruments. Further, our study contributes to several econometric approaches that adjust for the 2SLS bias \citep[e.g.,][]{nagar1959bias,chamberlain2004random,chao2005consistent,hansen2014instrumental,spiess2017bias}. In economics, our study is closely related to \cite{angrist2022machine}, who compare bias-adjusted IV estimators with estimators based on Lasso in the first stage regression. They conclude that statistical learning techniques are less encouraging in IV settings. In contrast to this conclusion, we show that bias-adjusted 2SLS estimators can indeed be motivated by results from the statistical learning literature. These allow us to improve the bias-adjustment, which leads to a better finite-sample performance. Our computational results can be replicated using software available from \href{https://github.com/farbmacher/RMT-many-instruments}{https://github.com/farbmacher/RMT-many-instruments}


\section{Model and 2SLS Estimation}

We have a random i.i.d. sample $\left\{ y_i,x_i,z_i' \right\} _{i=1}^{n}$ drawn from the following simultaneous equations model
\begin{eqnarray}\label{model}
	y_i&=&\beta x_i+ \varepsilon_i \\
	x_i&=&z_i' \pi + \nu_i
\end{eqnarray}
where $y_i$ and $x_i$ are the entries of the $n\times 1$ vectors $y$ and $x$, and $z_i$ are arranged as the rows of the $n \times k$ matrix $Z$. The aim is to estimate the causal effect $\beta$. Throughout, we assume that the following model conditions hold.
\begin{assumption}[Model]
	\label{A_model}
	Let $\eta_i=\left( \varepsilon_i~\nu_{i}\right)
	^{\prime }$ be normally distributed independent of the other random quantities with $E\left[ \eta_i \right] =0$ and $E\left[ \eta_i \eta_i' \right] =\left[ 
	\begin{array}{cc}
		\sigma_\varepsilon^2 & \sigma_{\varepsilon\nu} \\ 
		\sigma_{\varepsilon\nu} & \sigma_\nu^2%
	\end{array}%
	\right]$, a positive definite matrix.
\end{assumption}

Assumption \ref{A_model} imposes homoskedasticity and joint normality of $\epsilon$ and $\nu$. If $\sigma_{\varepsilon\nu}\neq 0 $,  $x$ is endogenous,  and the OLS estimator of $\beta$ is inconsistent.  The instrumental variables $Z$ can be used to overcome the endogeneity. The standard IV estimator is a 2SLS procedure. First, we use least-squares to predict $\widehat{x}=P_0 x$ with $P_0=Z(Z'Z)^{-1}Z' \, $. In a second least-squares regression, we use $\widehat{x} \, $ to estimate the causal effect $\beta$. A first-stage prediction based on Ridge is defined as $\widehat{x}_\lambda=P_\lambda x$ with $P_{\lambda}=Z(Z'Z/n+\lambda I_k)^{-1}Z'/n \, $ being the Ridge smoother and $\lambda$ being a tuning parameter. When we set $\lambda\downarrow 0$ in settings with fewer instruments than observations, we get 2SLS. If we set $\lambda\downarrow 0$ in settings with more instruments than observations, we obtain a ridgeless 2SLS estimator.\footnote{For a detailed discussion of ridgeless least-squares and its interpretation as min-norm least-squares estimator see, for example, \cite{hastie2022surprises}.}

Following the RMT literature, we consider a high-dimensional asymptotic regime for the instrumental variables. 
\clearpage 
\begin{assumption}[High-Dimensional Asymptotics]\label{A_hda}
	\phantom{The following conditions hold:}
	\begin{itemize}
		\item[a)] The data $Z\in \mathbb{R}^{n\times k}$ are generated as $Z=W\Sigma^{1/2}$ for an $n\times k$ matrix $W$ with i.i.d. entries satisfying $\E[W_{ij}]=0$ and $\V[W_{ij}]=1$, and a deterministic $k\times k$ positive semidefinite covariance matrix $\Sigma \, $.
		\item[b)] The sample size $n \to \infty$ and the number of instruments $k\to \infty$, such that the ratio $k/n \to \gamma >0 \,$.
		\item[c)] The distribution of the eigenvalues of $\Sigma$, $F_{\Sigma}$, also called spectral distribution, converges to a limit probability distribution $H$ supported on $[0, \infty)$, called the population spectral distribution (PSD).
	\end{itemize}
\end{assumption}
See \cite{dobriban_wager} for a detailed discussion of covariance matrices $\Sigma$ that are aligned with this setting. For example, the AR-1 model with $\Sigma_{ij}=\rho^{\vert i-j \vert}$, which is often used in econometrics, satisfies Assumption \ref{A_hda}. 

Moreover, following the RMT literature, we assume that each instrument has only a small and independent random effect on the endogenous variable. 
\begin{assumption}[Random Coefficients]
	\label{A_randcoeff}
	The first-stage parameters $\pi \in \mathbb{R}^k$ are random with $\E[\pi]=0 \,  $, $\V[\pi]=k^{-1}\alpha^2 I_k$ and $\alpha^2\neq 0 \, $.
\end{assumption}
This assumption has been used before in the literature on many instruments, e.g. in a theoretical analysis without RMT by \cite{chamberlain2004random}, and can be regarded as an average-case analysis over dense parameters. The concentration parameter, $\mu^2\equiv n\alpha^2/\sigma_\nu^2 \, $ is a standard measure of instrument strength. In a weak instrument setting, the concentration parameter is small relative to the sample size. 

Throughout the paper, we illustrate our theoretical results using Monte-Carlo simulations based on a standard many IV setting
\begin{eqnarray*}
	y&=&\beta x+ \varepsilon \\
	x&=&Z\pi + \nu
\end{eqnarray*}
with $\varepsilon=\rho \nu +\sqrt{1-\rho^2} w$ where $\nu\sim N(0,1)$ and $w\sim N(0,1)$ are independent, and where we generally set $\rho=0.6$. Moreover, $\pi\sim N(0,\sigma^2_\pi)$ with $\sigma^2_\pi=\alpha^2/k$ and $\alpha^2=\gamma F \, $. This allows us to control the first-stage $F$ statistic, which we generally set equal to 5. We show simulation results for $Z\sim N(0,\Sigma)$ and a) isotropic $\Sigma=I_k$ or b) AR-1 model $\Sigma_{ij}=0.5^{\vert i-j\vert}$. Additionally, we set $n=200$, $k=150$ or $k=250$, and $\beta=0$.

\section{Results from Random Matrix Theory}

We start with a brief review of key results from the theory of random matrices, which help us to describe and adjust for the bias induced by Ridge estimation in the first-stage regression. From \cite{marchenko1967distribution} and \cite{silverstein1995strong} we know that the spectral distribution of the sample covariance matrix $\widehat\Sigma=Z'Z/n$ converges weakly, with probability 1, to a limiting distribution. This limiting distribution is called the empirical spectral distribution (ESD). Let $Y$ be a random variable distributed according to the ESD of $\widehat\Sigma$, then the well-known Stieltjes transform $m\lam$ is
\begin{eqnarray*}
	m\lam=\E \left[ \frac{1}{Y+\lambda} \right].
\end{eqnarray*}
The Stieltjes transform of the spectral measure of $\widehat\Sigma$ satisfies
\begin{eqnarray*}
	\widehat{m}\lam=\frac{1}{k} \tr\big( (Z'Z/n+\lambda I_k)^{-1} \big) \to_{a.s.} m\lam \, .
\end{eqnarray*}

Additionally, we need the companion Stieltjes transform $v\lam$, which is the Stieltjes transform of the ESD of the $n \times n$ matrix $\underline{\widehat\Sigma}=ZZ'/n\,$. Let $\underline{Y}$ be a random variable distributed according to the ESD of $\underline{\widehat\Sigma}\, $, then
\begin{eqnarray*}
	v\lam=\E \left[ \frac{1}{\underline{Y}+\lambda} \right] ,
\end{eqnarray*}
and the Stieltjes transform of the spectral measure of $\underline{\widehat\Sigma}$ satisfies
\begin{eqnarray*}
	\widehat{v}\lam=\frac{1}{n} \tr\big( (ZZ'/n+\lambda I_n)^{-1} \big) \to_{a.s.} v(-\lambda) \, .
\end{eqnarray*}

To calculate the population quantities $m\lam$ and $v\lam$ as a benchmark in the simulations, we use the procedure described in \cite{dobriban_wager_supplement}. For estimation, we follow \cite{dobriban2020wonder} and use $\widehat{m}\lam$ and $\widehat{v}(-\lambda) \, $.

In the following, we repeatedly use well-known results in the RMT literature (the proofs are replicated in Appendix \ref{appendix_proof_lemmas} for convenience). First, we mention a useful representation of the Ridge smoother. 
\begin{lemma}\label{Lemma_ridgerep}
	\begin{align*}
		P_{\lambda} = I_n - \lambda (ZZ'/n+\lambda I_n)^{-1} \, .
	\end{align*}
\end{lemma}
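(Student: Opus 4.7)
The plan is to establish the identity by the push-through (resolvent) identity and one line of algebra. Concretely, I would first verify that
\begin{equation*}
	Z \bigl(Z'Z/n + \lambda I_k\bigr) = \bigl(ZZ'/n + \lambda I_n\bigr) Z,
\end{equation*}
which is immediate from expanding both sides: both equal $ZZ'Z/n + \lambda Z$. Because both $Z'Z/n + \lambda I_k$ and $ZZ'/n + \lambda I_n$ are invertible for $\lambda > 0$ (they are positive definite), I can multiply this identity on the left by $(ZZ'/n + \lambda I_n)^{-1}$ and on the right by $(Z'Z/n + \lambda I_k)^{-1}$ to obtain the push-through form
\begin{equation*}
	Z \bigl(Z'Z/n + \lambda I_k\bigr)^{-1} = \bigl(ZZ'/n + \lambda I_n\bigr)^{-1} Z.
\end{equation*}

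Post-multiplying by $Z'/n$ yields
\begin{equation*}
	P_{\lambda} = Z \bigl(Z'Z/n + \lambda I_k\bigr)^{-1} Z'/n = \bigl(ZZ'/n + \lambda I_n\bigr)^{-1} \bigl(ZZ'/n\bigr).
\end{equation*}
Then I would add and subtract $\lambda I_n$ inside the rightmost factor, writing $ZZ'/n = (ZZ'/n + \lambda I_n) - \lambda I_n$, so that
\begin{equation*}
	P_{\lambda} = \bigl(ZZ'/n + \lambda I_n\bigr)^{-1} \bigl[(ZZ'/n + \lambda I_n) - \lambda I_n\bigr] = I_n - \lambda \bigl(ZZ'/n + \lambda I_n\bigr)^{-1},
\end{equation*}
which is exactly the claimed representation.

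There is no real obstacle here; the whole content of the lemma is the push-through identity together with a trivial rearrangement. The only point that warrants a brief remark is invertibility of the two resolvents, which follows from $\lambda > 0$ (or, more generally, from $\lambda$ not belonging to the spectrum of $-Z'Z/n$, which is where the objects in the lemma are defined in the first place).
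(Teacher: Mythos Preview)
Your proof is correct and follows essentially the same approach as the paper: both rely on the push-through identity (the paper calls it the ``kernel trick'', i.e.\ $(Z'Z/n+\lambda I_k)^{-1}Z' = Z'(ZZ'/n+\lambda I_n)^{-1}$) together with the trivial add-and-subtract rearrangement $ZZ'/n = (ZZ'/n+\lambda I_n)-\lambda I_n$. The only cosmetic difference is the starting point---the paper begins from $I_n = (ZZ'/n+\lambda I_n)(ZZ'/n+\lambda I_n)^{-1}$ and substitutes in $P_\lambda$, whereas you start from $P_\lambda$ and work toward $I_n$---and that you explicitly derive the push-through identity rather than citing it.
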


Second, we mention a useful link between the Stieltjes transform and its companion. 
\begin{lemma}\label{Lemma_companion}
	\begin{align*}
		\lambda v\lam = 1 - \gamma (1-\lambda m\lam) \, .
	\end{align*}
\end{lemma}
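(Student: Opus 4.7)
The plan is to exploit the standard spectral fact that the matrices $Z'Z/n$ (size $k\times k$) and $ZZ'/n$ (size $n\times n$) share the same non-zero eigenvalues, while the larger of the two carries an additional $|n-k|$ eigenvalues equal to zero. Writing each inverse in its spectral decomposition and adding $\lambda$ to every eigenvalue immediately gives the identity
\begin{equation*}
\tr\bigl((ZZ'/n+\lambda I_n)^{-1}\bigr) - \tr\bigl((Z'Z/n+\lambda I_k)^{-1}\bigr) = \frac{n-k}{\lambda},
\end{equation*}
which holds whether $k\le n$ or $k>n$ (the sign of $n-k$ takes care of which matrix has the extra zero eigenvalues).

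Next, I would translate this into the normalised traces appearing in the definitions of $\widehat m(-\lambda)$ and $\widehat v(-\lambda)$. Dividing the previous display by $n$ and writing $k/n$ as the finite-sample aspect ratio $\gamma_n$, I get $\widehat v(-\lambda) - \gamma_n \widehat m(-\lambda) = (1-\gamma_n)/\lambda$. Multiplying through by $\lambda$ and rearranging yields
\begin{equation*}
\lambda \widehat v(-\lambda) = 1 - \gamma_n\bigl(1 - \lambda \widehat m(-\lambda)\bigr).
\end{equation*}

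Finally, I would pass to the limit. Under Assumption \ref{A_hda} we have $\gamma_n \to \gamma$, together with the almost-sure convergences $\widehat m(-\lambda) \to m(-\lambda)$ and $\widehat v(-\lambda) \to v(-\lambda)$ stated just before the lemma. Sending $n,k\to\infty$ in the displayed equation gives the claimed identity $\lambda v(-\lambda) = 1 - \gamma(1-\lambda m(-\lambda))$.

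There is essentially no obstacle here: the argument is purely linear-algebraic once one notices the shared non-zero spectrum, and the limit is justified by the convergence statements already recorded in the paper. The only mild bookkeeping issue is handling both regimes $k\le n$ and $k>n$ in a unified way, which is exactly what the signed factor $n-k$ in the trace identity achieves.
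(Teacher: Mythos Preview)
Your proof is correct and follows essentially the same approach as the paper: both exploit that $Z'Z/n$ and $ZZ'/n$ share the same nonzero eigenvalues while differing by $|n-k|$ zero eigenvalues, leading to the relation $v(-\lambda)=\gamma m(-\lambda)+(1-\gamma)/\lambda$, which rearranges to the claim. The only cosmetic difference is that you derive the identity at the finite-sample level for $\widehat v,\widehat m,\gamma_n$ and then pass to the limit, whereas the paper states it directly for the limiting quantities.
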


Third, we mention limits involving the Stieltjes transforms that help us to work out the links to the existing literature on many weak instruments. 
\begin{lemma}\label{Lemma_vlam0}
	For the ridgeless limit ($\lambda\downarrow 0$), we have
	\begin{itemize}
		\item[] if $\gamma<1$, then \ 1a) $\lim_{\lambda\downarrow 0} \lambda v\lam = 1-\gamma$ \ and \ 1b) $\lim_{\lambda\downarrow 0} \lambda m\lam = 0 \, ,$ \vspace{-0.2cm}
		\item[] if $\gamma>1$, then \ 2a) $\lim_{\lambda\downarrow 0} \lambda v\lam = 0$ \ and \ 2b) $\lim_{\lambda\downarrow 0} \lambda m\lam = 1-\frac{1}{\gamma} \, .$
	\end{itemize}
\end{lemma}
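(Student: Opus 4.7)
The plan is to combine Lemma \ref{Lemma_companion} with a finite-sample eigenvalue identity and then pass to the almost sure RMT limit. The key observation is that $Z'Z/n$ and $ZZ'/n$ share all of their nonzero eigenvalues, so the $|n-k|$ surplus eigenvalues of whichever matrix is larger must be zero: if $\gamma<1$, then $ZZ'/n$ has at least $n-k$ zero eigenvalues; if $\gamma>1$, then $Z'Z/n$ has at least $k-n$ zero eigenvalues. These unavoidable zero eigenvalues produce a $1/\lambda$ blow-up in the relevant Stieltjes transform, and multiplying by $\lambda$ exactly isolates that point mass.

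For Case 1 ($\gamma<1$), I would start from the decomposition
\begin{equation*}
\widehat v(-\lambda)=\frac{n-k}{n}\cdot\frac{1}{\lambda}+\frac{1}{n}\sum_{i:\lambda_i(ZZ'/n)>0}\frac{1}{\lambda_i(ZZ'/n)+\lambda},
\end{equation*}
multiply through by $\lambda$, and use $\widehat v(-\lambda)\to_{a.s.}v(-\lambda)$ together with $k/n\to\gamma$. The first term converges to $1-\gamma$, while the second is $\lambda$ times a bounded quantity and vanishes as $\lambda\downarrow 0$. This yields Part 1a, and Lemma \ref{Lemma_companion} then delivers Part 1b since $\gamma\lambda m(-\lambda)=\lambda v(-\lambda)-(1-\gamma)\to 0$. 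Case 2 ($\gamma>1$) is completely symmetric: I would apply the analogous decomposition to $\widehat m(-\lambda)$, now with forced mass $(k-n)/k$ at zero, to obtain Part 2b, and then invoke Lemma \ref{Lemma_companion} once more to recover Part 2a, $\lambda v(-\lambda)\to 1-\gamma+\gamma(1-1/\gamma)=0$.

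The main obstacle is justifying that the residual sum $\lambda\cdot n^{-1}\sum_{i:\lambda_i>0}(\lambda_i+\lambda)^{-1}$ really does vanish in the limit. This needs the nonzero part of the limiting spectrum to be supported on a compact subset of $(0,\infty)$, which is a standard consequence of the Marchenko-Pastur equation under Assumption \ref{A_hda} whenever the PSD $H$ is bounded away from zero on its support. In that regime no additional atom at zero arises beyond the one forced by the rank deficiency, so the interchange of the $\lambda\downarrow 0$ and $n\to\infty$ limits is legitimate and both halves of the lemma follow.
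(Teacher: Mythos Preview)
Your argument is correct, but it takes a somewhat different route from the paper's. The paper works directly at the limiting level and applies the dominated convergence theorem to whichever transform has \emph{no} atom at zero: when $\gamma<1$ it uses that $Y$ (the limiting ESD of $Z'Z/n$) is bounded away from zero, so $\lambda m(-\lambda)=\E[\lambda/(Y+\lambda)]\to 0$ by DCT, giving 1b first and then 1a via Lemma~\ref{Lemma_companion}; when $\gamma>1$ it does the same with $\underline Y$ to get 2a first, then 2b. You instead pick the transform that \emph{does} have the forced atom at zero, strip that atom off explicitly via the rank identity, and argue the remainder vanishes. Note that once you pass your finite-sample decomposition to the $n\to\infty$ limit, it collapses to Lemma~\ref{Lemma_companion} itself---your residual sum is exactly $\gamma\lambda m(-\lambda)$---so the substantive step (``residual vanishes'') is precisely 1b, and you are really proving things in the same logical order as the paper, just narrated in reverse. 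Your framing has the pedagogical benefit of making the source of the constants $1-\gamma$ and $1-1/\gamma$ transparent (the rank-forced zero eigenvalues), whereas the paper's DCT argument is slightly cleaner because it sidesteps any discussion of interchanging the $n\to\infty$ and $\lambda\downarrow 0$ limits by working with the limiting spectral measure from the outset. Both approaches ultimately rest on the same spectral fact you correctly flag: the nonzero part of the limiting spectrum is bounded away from zero, which the paper cites as Lemma~2.3 in \cite{dobriban_wager}.
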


We, moreover, make use of the Silverstein equation \citep{silverstein1995strong}, which is
\begin{align*}
	\frac{1}{v\lam} = \lambda + \gamma \int \frac{t}{1 + tv\lam} dH(t)
\end{align*}
where $\frac{dH(t)}{dt}$ is the pdf of the PSD. The Silverstein equation links the limit PSD to the limit ESD. If instruments are equi-correlated with correlation coefficient $\rho_z$, $\Sigma$ has $k-1$ eigenvalues that equal $1-\rho_z$, thus, the pdf of the PSD is a point mass at $1-\rho_z$. Therefore, the Silverstein equation has a closed-form expression
\begin{align*}
	\frac{1}{v\lam} = \lambda + \gamma \frac{1-\rho_z}{1 + ( 1-\rho_z ) v\lam} \, .
\end{align*}


In the special case of isotropic instruments ($\rho_z=0$), this simplifies to
\begin{align*}
	\frac{1}{v\lam}=\lambda + \frac{\gamma}{1+v\lam}
\end{align*}
\citep[see also][]{hastie2022supplement}.

\section{2SLS Estimation and RMT}

\subsection{Bias Approximation}
We derive a bias approximation for the causal effect $\beta$ when the first-stage \mbox{regression} is based on Ridge. Consider the first-stage Ridge estimates $\widehat\pi_\lambda$ from a regression of the endogenous variable $x$ on the instruments $Z$ given by
\begin{align*}
	\widehat\pi_\lambda = \argmin_{\pi} \left\{ \frac{1}{n}\Vert x-Z\pi \Vert^2_2 + \lambda \Vert \pi \Vert^2_2 \right\} \, ,
\end{align*}
or, equivalently written as $\widehat\pi_\lambda=\big( Z'Z/n+\lambda I_k \big)^{-1}Z'x/n\, $. In the second stage, we use the predicted values $\widehat{x}_{\lambda}=Z\widehat\pi_\lambda$ in a just-identified 2SLS regression to obtain
\begin{equation}
	\widetilde\beta_{\lambda} = \frac{x'P_{\lambda}y}{x'P_{\lambda}x } \, .
\end{equation}
In the following, we call $\widetilde\beta_\lambda$ the 2SLS-Ridge estimator to emphasize that it is a two-stage procedure with Ridge in the first stage and just-identified 2SLS in the second stage regression. For the limit $\lambda\downarrow 0$, we get the (ridgeless) 2SLS estimator, which corresponds to standard 2SLS if $\gamma<1$. As already mentioned in \cite{hansen2014instrumental}, 2SLS-Ridge is biased.  Theorem \ref{theorem_ridge} gives precise information about the bias induced by using Ridge in the first-stage regression. While the numerator only contains a noise term, the denominator is the sum of a signal term and a noise term. For convenience, we use the Stieltjes transform $m\lam$ for all terms related to the identification strength $\alpha^2$ and the companion Stieltjes transform $v\lam$ for all terms related to $\sigma^2_\nu$ or $\sigma_{\varepsilon\nu}$.
\begin{theorem}\label{theorem_ridge}
	Under Assumptions \ref{A_model}, \ref{A_hda}, and \ref{A_randcoeff}, suppose additionally that the eigenvalues of $\Sigma$ are uniformly bounded away from zero and infinity, and $\E[W_{ij}^8]$ are uniformly bounded from above. Then, the bias of 2SLS-Ridge estimation converges almost surely to
	\begin{align*}
		\widetilde\beta_{\lambda}-\beta \to_{a.s.} \frac{ \sigma_{\varepsilon\nu} \big( 1 - \lambda v \lam \big) }{ \alpha^2 \big(1-\lambda p\lam\big) + \sigma_\nu^2 (1-\lambda v\lam) } 
	\end{align*}
	with $p\lam\equiv 1-\lambda m\lam $.
\end{theorem}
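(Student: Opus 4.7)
The plan is to write the bias in ratio form
\[
  \widetilde\beta_\lambda - \beta = \frac{n^{-1} x' P_\lambda \varepsilon}{n^{-1} x' P_\lambda x}
\]
using $y = \beta x + \varepsilon$, substitute $x = Z\pi + \nu$ in both the numerator and the denominator, and compute the almost sure limit of each resulting quadratic form separately. After expansion, the numerator splits as $n^{-1}\pi' Z' P_\lambda \varepsilon + n^{-1}\nu' P_\lambda \varepsilon$, and the denominator as $n^{-1}\pi' Z' P_\lambda Z \pi + 2 n^{-1}\pi' Z' P_\lambda \nu + n^{-1}\nu' P_\lambda \nu$.

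The two cross terms involving $Z\pi$ together with a single noise vector have conditional mean zero (by Assumption \ref{A_model} and independence of $\pi$ from $(\varepsilon,\nu)$) and conditional variance of order $1/n$, since $\|P_\lambda\|_{\mathrm{op}} \le 1$ and $\|Z\pi\|^2/n$ is bounded under Assumptions \ref{A_hda} and \ref{A_randcoeff}. They vanish almost surely by a Hanson-Wright-type deviation combined with Borel-Cantelli, which is exactly where the uniform spectral bound on $\Sigma$ and the eighth-moment bound on $W_{ij}$ are needed. For the two pure noise terms, the standard trace lemma for quadratic forms in i.i.d.\ vectors yields $n^{-1}\nu' P_\lambda \nu - \sigma_\nu^2 \, n^{-1}\mathrm{tr}(P_\lambda) \to 0$ and $n^{-1}\nu' P_\lambda \varepsilon - \sigma_{\varepsilon\nu}\, n^{-1}\mathrm{tr}(P_\lambda) \to 0$ almost surely. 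Lemma \ref{Lemma_ridgerep} then collapses $n^{-1}\mathrm{tr}(P_\lambda) = 1 - \lambda \widehat v(-\lambda) \to 1 - \lambda v(-\lambda)$, producing the full numerator limit $\sigma_{\varepsilon\nu}(1 - \lambda v(-\lambda))$ and the $\sigma_\nu^2(1 - \lambda v(-\lambda))$ summand of the denominator.

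The signal term is the most informative piece. Integrating out $\pi$ gives $\E[n^{-1}\pi' Z' P_\lambda Z \pi \mid Z] = (\alpha^2/k)\, \mathrm{tr}(n^{-1} Z' P_\lambda Z)$. Substituting $P_\lambda = Z(\widehat\Sigma + \lambda I_k)^{-1} Z'/n$ and applying the algebraic identity $A(A + \lambda I)^{-1} A = A - \lambda I + \lambda^2 (A + \lambda I)^{-1}$ with $A = \widehat\Sigma$ delivers the clean resolvent expression
\[
  n^{-1} Z' P_\lambda Z = \widehat\Sigma - \lambda I_k + \lambda^2 (\widehat\Sigma + \lambda I_k)^{-1}.
\]
Tracing and scaling by $\alpha^2/k$ turns this into $\alpha^2[\,k^{-1}\mathrm{tr}(\widehat\Sigma) - \lambda + \lambda^2 \widehat m(-\lambda)\,]$, which, under the standard RMT normalization $k^{-1}\mathrm{tr}(\widehat\Sigma) \to 1$ (consistent with both the isotropic and AR-1 examples used throughout the paper and with the interpretation of $\alpha^2$ via the concentration parameter $\mu^2 = n\alpha^2/\sigma_\nu^2$), converges almost surely to $\alpha^2(1 - \lambda + \lambda^2 m(-\lambda)) = \alpha^2(1 - \lambda p(-\lambda))$. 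Concentration of the quadratic form around its conditional mean again uses the moment assumptions, and Lemma \ref{Lemma_companion} can be invoked to double-check consistency between the two occurrences of $1-\lambda v(-\lambda)$ and $1-\lambda p(-\lambda)$ in the final expression.

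Assembling the four limits via the continuous mapping theorem produces exactly the displayed ratio. The main obstacle I anticipate is upgrading the conditional-mean identities to genuinely almost sure statements in the joint $(n,k)$-asymptotics, uniformly over the spectrum of $\Sigma$; this is precisely the role of the eighth-moment bound on $W_{ij}$ and the uniform spectral bound on $\Sigma$, and the argument follows the deterministic-equivalent toolkit used in \cite{dobriban_wager}.
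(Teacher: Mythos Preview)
Your proposal is correct and follows the same route as the paper: write the bias as a ratio, expand via $x=Z\pi+\nu$, and obtain the almost sure limits of the resulting quadratic forms using the Bai--Silverstein trace lemma together with the resolvent identity in Lemma~\ref{Lemma_ridgerep}. The only organizational difference is that the paper handles the entire denominator $x'P_\lambda x/n$ in a single concentration step by stacking $\delta=(\pi,\,\nu/c_n)$ with $c_n=\sqrt{k}\,\sigma_\nu/\alpha$ into one vector of common entry variance $\alpha^2/k$ and writing $x'P_\lambda x/n=\delta' A\,\delta$ for an $A$ independent of $\delta$; this absorbs your two cross terms automatically, and $(\alpha^2/k)\tr(A)$ then splits into exactly your three limits $\alpha^2$, $-\alpha^2\lambda p(-\lambda)$, and $\sigma_\nu^2(1-\lambda v(-\lambda))$.
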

\begin{proof}
	See Appendix \ref{appendix_proof_ridge}.
\end{proof}

Before discussing ways to get rid of the bias term later in Section \ref{Section_adjustment}, we briefly address the links of our bias approximation to the existing literature and the advantage of using Ridge compared to least-squares in the first-stage regression. The first part of the next corollary replicates the well-established 2SLS bias approximation using Theorem \ref{theorem_ridge} and Lemma \ref{Lemma_vlam0} (compare, for example, \cite{hahn2002new,hahn2002notes} and \cite{chao2005consistent}). The second part of the corollary extends these results to ridgeless 2SLS estimation. 
\begin{corollary}\label{corollary_biasridge}
	For the ridgeless limit ($\lambda\downarrow 0$), we have
	\begin{align*}
		\text{if $\gamma<1$, then} & \ \ \lim_{\lambda\downarrow 0} \frac{ \sigma_{\varepsilon\nu} \big( 1 - \lambda v \lam \big) }{ \alpha^2 \big(1-\lambda p\lam\big) + \sigma_\nu^2 (1-\lambda v\lam) }  = \frac{\gamma\sigma_{\varepsilon\nu}}{\alpha^2 + \gamma \sigma^2_\nu}= \frac{\sigma_{\varepsilon\nu}}{\sigma^2_\nu} \ \frac{1}{ \mu^2/k + 1} \, , \\
		\text{if $\gamma>1$, then} & \ \ \lim_{\lambda\downarrow 0} \frac{ \sigma_{\varepsilon\nu} \big( 1 - \lambda v \lam \big) }{ \alpha^2 \big(1-\lambda p\lam\big) + \sigma_\nu^2 (1-\lambda v\lam) }  = \frac{\sigma_{\varepsilon\nu}}{\alpha^2 + \sigma^2_\nu}= \frac{\sigma_{\varepsilon\nu}}{\sigma^2_\nu} \ \frac{1}{ \mu^2/n + 1} \, ,
	\end{align*}
using the definition of the concentration parameter $\mu^2\equiv n\alpha^2/\sigma_\nu^2\, $.
\end{corollary}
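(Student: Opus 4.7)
The plan is to treat this as a direct substitution of Lemma \ref{Lemma_vlam0} into the bias expression of Theorem \ref{theorem_ridge}, followed by elementary algebra to recast the result in terms of the concentration parameter $\mu^2$. The key observation is that in the denominator the only dependence on $m\lam$ enters through $1-\lambda p\lam = 1-\lambda + \lambda^2 m\lam$ (using $p\lam\equiv 1-\lambda m\lam$), so one only needs to know how $\lambda m\lam$ and $\lambda v\lam$ behave in the ridgeless limit.

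First I would note that in both regimes $\lim_{\lambda \downarrow 0} (1-\lambda p\lam) = 1$. Indeed, $\lambda \to 0$ trivially, and for the remaining term $\lambda^2 m\lam = \lambda \cdot (\lambda m\lam)$, Lemma \ref{Lemma_vlam0} gives $\lambda m\lam \to 0$ when $\gamma<1$ and $\lambda m\lam \to 1-1/\gamma$ when $\gamma>1$; in either case multiplication by the vanishing factor $\lambda$ sends the product to zero. Simultaneously, from the same lemma, $1-\lambda v\lam \to \gamma$ when $\gamma<1$ and $1-\lambda v\lam \to 1$ when $\gamma>1$. Substituting into the expression from Theorem \ref{theorem_ridge} yields
\begin{align*}
\lim_{\lambda\downarrow 0} \frac{\sigma_{\varepsilon\nu}(1-\lambda v\lam)}{\alpha^2(1-\lambda p\lam)+\sigma_\nu^2(1-\lambda v\lam)}
= \begin{cases} \dfrac{\gamma\sigma_{\varepsilon\nu}}{\alpha^2+\gamma\sigma_\nu^2}, & \gamma<1,\\[4pt] \dfrac{\sigma_{\varepsilon\nu}}{\alpha^2+\sigma_\nu^2}, & \gamma>1. \end{cases}
\end{align*}

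The last step is to recognise the concentration parameter. For $\gamma<1$, dividing numerator and denominator by $\gamma\sigma_\nu^2$ and using $\gamma = k/n$ gives $\alpha^2/(\gamma\sigma_\nu^2) = n\alpha^2/(k\sigma_\nu^2) = \mu^2/k$, so the expression becomes $(\sigma_{\varepsilon\nu}/\sigma_\nu^2)\cdot (\mu^2/k+1)^{-1}$. For $\gamma>1$, dividing by $\sigma_\nu^2$ and using $\alpha^2/\sigma_\nu^2 = \mu^2/n$ gives $(\sigma_{\varepsilon\nu}/\sigma_\nu^2)\cdot (\mu^2/n+1)^{-1}$. There is no real obstacle here; the only mildly non-obvious point is that even though $\lambda m\lam$ does not vanish when $\gamma>1$, the denominator contribution $\lambda^2 m\lam$ still does, so both cases give the same clean limit $1-\lambda p\lam\to 1$.
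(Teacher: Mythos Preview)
Your proposal is correct and matches the paper's approach: the paper does not give a separate proof of this corollary, merely noting that it follows from Theorem \ref{theorem_ridge} together with Lemma \ref{Lemma_vlam0}, which is precisely the substitution-and-rewrite argument you carry out.
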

Comparing these results with the bias of OLS \citep{hahn2002notes}, which is
\begin{align*}
	\frac{ \sigma_{\varepsilon\nu}}{\sigma^2_\nu} \  \frac{1}{\frac{\mu^2}{n}+1} \, ,
\end{align*}
we find that the (ridgeless) 2SLS bias is never larger than the OLS bias (even if $\gamma>1$). The equivalence of the ridgeless 2SLS and OLS bias for $\gamma>1$ intuitively follows from the fact that the ridgeless 2SLS estimator interpolates the data in the first stage if $\gamma>1$, i.e. $x=Z\widehat\pi_{(\lambda\downarrow 0)}=\widehat{x}_{(\lambda\downarrow 0)}$. More interestingly, we can also show that increasing $\lambda$ reduces the bias relative to the (ridgeless) 2SLS bias (even if $\gamma>1$). To see this, we can rearrange the bias of 2SLS-Ridge (when we rule out the case $\lambda v\lam=1$, which occurs if and only if $\lambda\to\infty$)
\begin{align*}
	\frac{ \sigma_{\varepsilon\nu} \big( 1 - \lambda v \lam \big) }{ \alpha^2 \big(1-\lambda p\lam\big) + \sigma_\nu^2 (1-\lambda v\lam) } 
	=\frac{ \sigma_{\varepsilon\nu}}{\sigma^2_\nu} \  \frac{1}{F \, a\lam+1} \, .
\end{align*}
$F \equiv \mu^2/k$ is the population first-stage $F$-statistic and $a\lam\equiv \gamma\frac{1-\lambda p\lam}{1-\lambda v\lam}$ can be interpreted as a Ridge amplifier of the first-stage $F$-statistic. We show that this amplifier equals unity for the standard 2SLS (i.e., $\lambda=0$). Additionally, we show that its first derivative close to the ridgeless limit is positive, which suggests that there is a value of $\lambda$ for which the bias of 2SLS-Ridge is smaller compared to (ridgeless) 2SLS. 

\begin{corollary}\label{corollary_amplifier}
	If $E[Y^{-1}]\geq1$, $\gamma<1$ and $\lambda v\lam \neq 1$, then
	\begin{align*}
		&\lim_{\lambda\downarrow 0} a\lam = \lim_{\lambda\downarrow 0} \left(\frac{1}{p\lam}-\lambda\right) =1 \, , \\ 
		&\lim_{\lambda\downarrow 0} a'\lam = -\lim_{\lambda\downarrow 0} \left(\frac{p'\lam}{p^2\lam}+1\right) \geq0 \, ,
	\end{align*}
	with $-p'\lam=m\lam-\lambda m'\lam$ and $m'\lam=\E\left[ \frac{1}{(Y+\lambda)^2} \right] \, $.
\end{corollary}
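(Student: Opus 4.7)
The plan is to reduce $a\lam$ to a more tractable form using Lemma \ref{Lemma_companion}, after which both assertions follow by elementary manipulation. By Lemma \ref{Lemma_companion}, $1-\lambda v\lam=\gamma(1-\lambda m\lam)=\gamma\, p\lam$, and the hypothesis $\lambda v\lam\neq 1$ (equivalently $p\lam\neq 0$) makes division by $p\lam$ legitimate. Plugging this into the definition of the amplifier cancels a factor of $\gamma$ and gives
\begin{align*}
a\lam=\gamma\,\frac{1-\lambda p\lam}{1-\lambda v\lam}=\frac{1-\lambda p\lam}{p\lam}=\frac{1}{p\lam}-\lambda,
\end{align*}
which is the middle expression in the first identity. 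To take the limit I would invoke Lemma \ref{Lemma_vlam0}, part 1b: because $\gamma<1$, $\lambda m\lam\to 0$, hence $p\lam\to 1$, and substituting back yields $\lim_{\lambda\downarrow 0}a\lam=1-0=1$.

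For the second identity I would differentiate $a\lam=1/p\lam-\lambda$ directly in $\lambda$. In the sign convention of the corollary, $m'\lam=\E[(Y+\lambda)^{-2}]=-\frac{d}{d\lambda}m\lam$, so $\frac{d}{d\lambda}p\lam=-m\lam+\lambda m'\lam=-(m\lam-\lambda m'\lam)$, which reproduces the stated identity $-p'\lam=m\lam-\lambda m'\lam$. The chain rule then gives
\begin{align*}
a'\lam=-\frac{p'\lam}{p^2\lam}-1=-\left(\frac{p'\lam}{p^2\lam}+1\right),
\end{align*}
as claimed. Taking $\lambda\downarrow 0$, $p\lam\to 1$ together with $\lambda m'\lam\to 0$ gives $-p'\lam\to m(0)=\E[Y^{-1}]$, so $\lim_{\lambda\downarrow 0}a'\lam=\E[Y^{-1}]-1$, which is nonnegative precisely under the standing hypothesis $\E[Y^{-1}]\geq 1$.

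The only non-algebraic point, and the main obstacle I anticipate, is justifying $\lim_{\lambda\downarrow 0}m\lam=\E[Y^{-1}]$ and $\lim_{\lambda\downarrow 0}\lambda m'\lam=0$. Both reduce to dominated convergence once $\E[Y^{-1}]$ and $\E[Y^{-2}]$ are known to be finite, which is inherited from the regime $\gamma<1$ combined with the uniform eigenvalue bounds on $\Sigma$ assumed in Theorem \ref{theorem_ridge}: the smallest eigenvalue of the sample covariance stays almost surely bounded below by a positive constant, so the support of the limiting ESD sits in a compact subset of $(0,\infty)$ and the required inverse moments exist. Everything else is algebra supplied by Lemmas \ref{Lemma_companion} and \ref{Lemma_vlam0}.
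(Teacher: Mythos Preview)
Your proof is correct and follows essentially the same route as the paper: the reduction $a\lam=1/p\lam-\lambda$ via Lemma~\ref{Lemma_companion}, the derivative formula $a'\lam=-p'\lam/p^2\lam-1$, and the appeal to dominated convergence (as in Lemma~\ref{Lemma_vlam0}) to pass to the limit. The only cosmetic difference is that the paper first multiplies through by $p^2\lam$ and splits $-p'\lam-p^2\lam$ into a variance term $B=\lambda^2 m'\lam-\lambda^2 m^2\lam\geq 0$ plus a remainder $A=\E[(Y-Y^2)/(Y+\lambda)^2]$ before taking $\lambda\downarrow 0$; since $B\to 0$ and $A\to \E[Y^{-1}]-1$, both arguments land on the same limit, and your direct evaluation of $-p'\lam\to \E[Y^{-1}]$ and $p\lam\to 1$ is the shorter path for the stated claim.
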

\begin{proof}
	See Appendix \ref{appendix_proof_corollaries}.
\end{proof}

Figure \ref{fig_bias_2sls} depicts the bias approximation from Theorem \ref{theorem_ridge} and the simulated bias of $\widetilde\beta_\lambda$ for different values of $\lambda$. In the left part of Figure \ref{fig_bias_2sls}, we see the standard bias approximation for 2SLS if $\lambda\downarrow 0$ and $\gamma<1$, while in the right part we can observe that the ridgeless 2SLS bias equals the OLS bias if $\gamma>1$, which illustrates the results in Corollary \ref{corollary_biasridge}. Independent of $\gamma$, we see that increasing $\lambda$ reduces the bias of 2SLS-Ridge. However, while increasing $\lambda$ reduces the bias of $\widetilde\beta_\lambda$, we cannot expect to get rid of it fully. Therefore, we propose a bias-adjustment for the 2SLS-Ridge estimator in the following. 

\begin{figure}[h!]
	\centering
	\includegraphics[width=0.50\textwidth]{./bias_2sls_ar1_75}\hfill
	\includegraphics[width=0.50\textwidth]{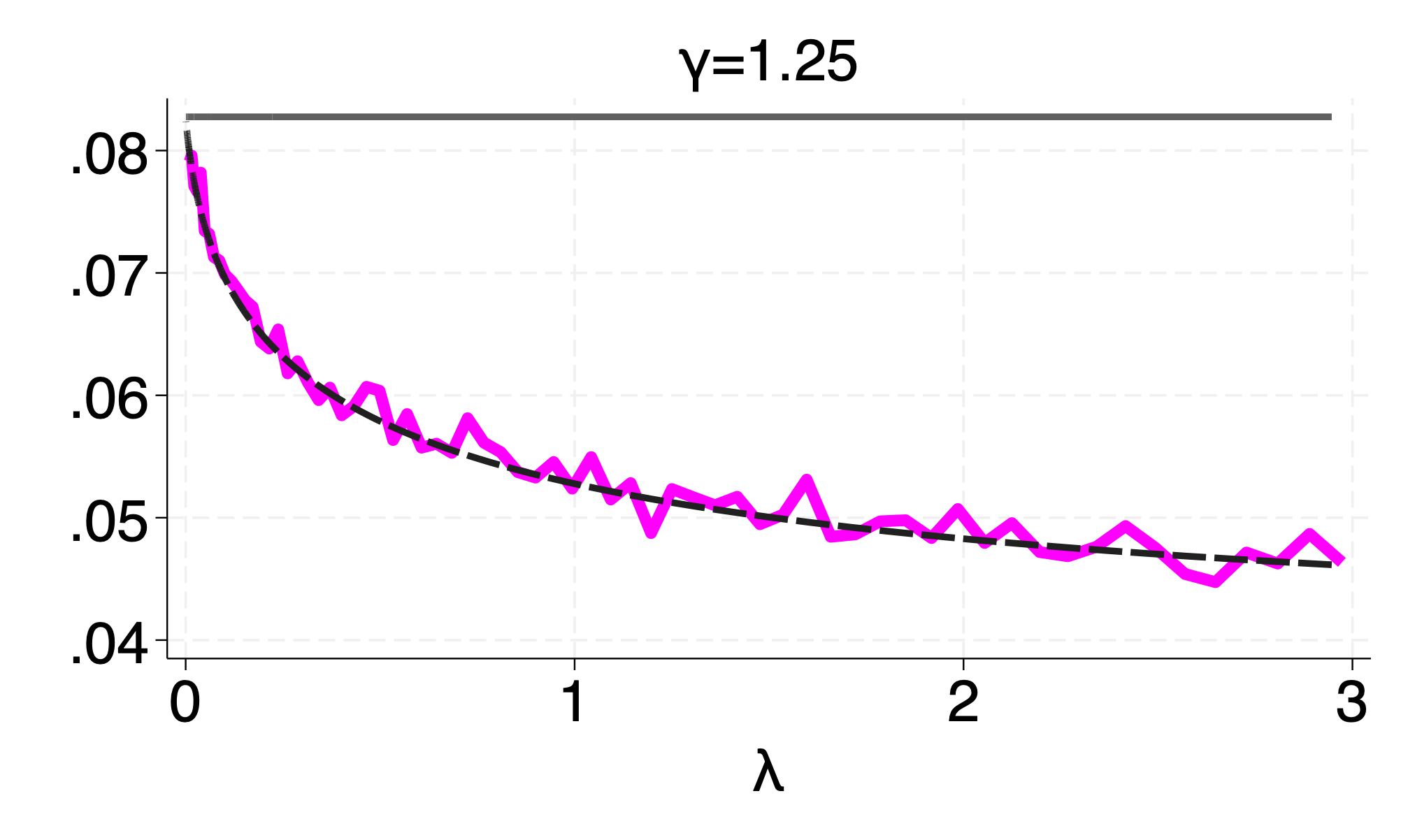}
	\caption{Bias approximation and simulation results for 2SLS-Ridge (average over 500 replications); dashed line shows RMT results from Theorem \ref{theorem_ridge}; solid grey line indicates OLS bias; AR-1 model $\Sigma_{ij}=0.5^{\vert i-j\vert}$. \label{fig_bias_2sls}}
\end{figure}

\subsection{Bias-Adjustment \label{Section_adjustment}}

The standard bias-adjusted 2SLS (aka Nagar's estimator, \citeyear{nagar1959bias}; \cite{donald2001choosing}; \cite{chao2005consistent}) has several representations
\begin{align*}
	\widehat\beta_0 &= \frac{x'P_0y - \frac{\gamma}{1-\gamma}x'M_0y}{x'P_0x - \frac{\gamma}{1-\gamma}x'M_0x} \\
	&= \frac{(1-\gamma)x'P_0y - \gamma x'M_0y}{(1-\gamma)x'P_0x - \gamma x'M_0x} \\
	&= \frac{x'P_0y - \gamma x'y}{x'P_0x - \gamma x'x} \, .
\end{align*}
Analogous to the bias-adjusted OLS projection matrix, $P_0 - \gamma I_n \, $, we propose a bias-adjusted Ridge smoother
\begin{align*}
	S_\lambda(s) = P_\lambda - s I_n \, .
\end{align*}
For the first term, we know $\tr(P_\lambda)/n\to_{a.s.} 1-\lambda v\lam$. Therefore, setting
\begin{align*}
	s_1 = \gamma \int \frac{t v\lam}{1 + tv\lam} dH(t) \, ,
\end{align*}
we know from the Silverstein equation that $\tr\big(S_\lambda(s_1)\big)/n \to_{a.s.} 0$. Unfortunately, the integral depends on unknown population values and cannot be derived in general. There are, however, several approaches to proceed. First, we may consider the case with $\lambda\downarrow 0$ and $\gamma<1$. Rearranging $s_1$ and using the dominated convergence theorem together with Lemma \ref{Lemma_vlam0}, we can derive that
\begin{align*}
	\lim_{\lambda\downarrow 0} s_1 = \gamma \lim_{\lambda\downarrow 0} \int \frac{t}{\frac{\lambda}{\lambda v\lam} + t} dH(t) = \gamma \, ,
\end{align*}
because $\lim_{\lambda\downarrow 0} \lambda v\lam = 1-\gamma$ and hence $\lim_{\lambda\downarrow 0} \lambda / \lambda v\lam = 0$. Again invoking the Silverstein equation gives us 
\begin{align*}
	\tr\big(S_0(s_1)\big)/n = \tr(P_0 - \gamma I_n)/n \to_{a.s.} 0 \, ,
\end{align*}
which, for $\gamma<1$, can be used to construct the standard bias-adjusted 2SLS. Second, the availability of a closed-form expression for the Silverstein equation can be used to debias the 2SLS-Ridge estimator for values of $\lambda$ different from zero and/or values of $\gamma$ larger than unity. We illustrate this for isotropic instruments ($\Sigma=I_k$). When we consider
\begin{align*}
	s_2 = \gamma\frac{v_I\lam}{1+v_I\lam}
\end{align*}
with the limit Stieltjes transform for such instruments depending only on $\gamma$ and $\lambda$
\begin{align*}
	\lambda v_I\lam = \frac{1-\gamma-\lambda+\sqrt{(1-\gamma-\lambda)^2+4\lambda}}{2} \, ,
\end{align*}
then we know from the Silverstein equation that $\tr\big( S_\lambda(s_2)\big)/n\to_{a.s.} 0$.

Third, we can use an estimate of $v\lam$ to construct an $s$ for general matrices $\Sigma$ that fulfill Assumption \ref{A_hda}. Consider
\begin{align*}
	s_3 = 1-\lambda \widehat{v}\lam
\end{align*}
with 
\begin{eqnarray*}
	\widehat{v}\lam=\frac{1}{n} \tr\big( (ZZ'/n+\lambda I_n)^{-1} \big) \to_{a.s.} v(-\lambda) \, , 
\end{eqnarray*}
and 
\begin{align*}
	S_{\lambda}(s_3)=P_{\lambda} - (1-\lambda \widehat{v}\lam)I_{n}
\end{align*}
such that by construction, $\tr\big(S_\lambda(s_3)\big)/n=0\, $. We focus on $S_\lambda(s_3)$, in the following simply referred to as $S_\lambda \, $, which gives us a bias-adjusted 2SLS-Ridge estimator that works for general values of $\lambda \, $,
\begin{equation}
	\widehat\beta_{\lambda} = \frac{x'S_{\lambda}y}{x'S_{\lambda}x } \, .
\end{equation}
The bias-adjustment utilizes that $P_\lambda$ acts differently on the signal and noise terms. By construction, the noise term is successfully eliminated while the signal is preserved. Theorem \ref{theorem_ba_ridge} shows that $\widehat\beta_{\lambda}$ estimates the causal effect consistently.
\begin{theorem}\label{theorem_ba_ridge}
	Under the conditions of Theorem \ref{theorem_ridge}, the bias of adjusted 2SLS-Ridge vanishes almost surely, i.e., 
	\begin{equation*}
		\widehat\beta_{\lambda}-\beta \to_{a.s.} 0 \, .
	\end{equation*}
\end{theorem}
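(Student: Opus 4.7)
The plan is to exploit the decomposition $S_\lambda = \lambda \widehat v\lam P_\lambda - (1-\lambda \widehat v\lam)M_\lambda$ and analyse the numerator and denominator of
\begin{equation*}
\widehat\beta_\lambda - \beta = \frac{x'S_\lambda \varepsilon / n}{x'S_\lambda x / n}
\end{equation*}
separately. Since $\widehat v\lam \to_{a.s.} v\lam$, the random weights $\lambda \widehat v\lam$ and $1-\lambda \widehat v\lam$ can be replaced in the limit by $\lambda v\lam$ and $1-\lambda v\lam$ by a Slutsky-type argument, so the whole task reduces to computing the almost-sure limits of the four bilinear forms $x'P_\lambda \varepsilon/n$, $x'M_\lambda\varepsilon/n$, $x'P_\lambda x/n$ and $x'M_\lambda x/n$.

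For $P_\lambda$ the limits are already supplied by the proof of Theorem \ref{theorem_ridge}: $x'P_\lambda \varepsilon/n \to_{a.s.} \sigma_{\varepsilon\nu}(1-\lambda v\lam)$ and $x'P_\lambda x/n \to_{a.s.} \alpha^2(1-\lambda p\lam) + \sigma_\nu^2(1-\lambda v\lam)$. For $M_\lambda = \lambda \ZZlamout^{-1}$ I would rerun the same structure of argument under $x = Z\pi + \nu$ and Assumption \ref{A_randcoeff}: the $\pi$--$\nu$ cross terms have conditional mean zero and vanish almost surely, while the diagonal pieces are controlled by $\tr(M_\lambda)/n = \lambda\widehat v\lam \to \lambda v\lam$ and by the identity $Z'M_\lambda Z/n = \lambda (Z'Z/n)\ZZlam^{-1}$, whose normalized trace converges to $\lambda p\lam$. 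This yields $x'M_\lambda\varepsilon/n \to_{a.s.} \sigma_{\varepsilon\nu}\lambda v\lam$ and $x'M_\lambda x/n \to_{a.s.} \lambda[\alpha^2 p\lam + \sigma_\nu^2 v\lam]$.

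Assembling the pieces, the noise contributions in the numerator cancel exactly by design:
\begin{equation*}
x'S_\lambda \varepsilon/n \to_{a.s.} \lambda v\lam \cdot \sigma_{\varepsilon\nu}(1-\lambda v\lam) - (1-\lambda v\lam)\cdot \sigma_{\varepsilon\nu}\lambda v\lam = 0,
\end{equation*}
while in the denominator the $\sigma_\nu^2$ contributions cancel for the same reason and the remaining algebra collapses to
\begin{equation*}
x'S_\lambda x/n \to_{a.s.} \lambda \alpha^2 \bigl(v\lam - p\lam\bigr).
\end{equation*}
To finish I would sign this denominator. Combining Lemma \ref{Lemma_companion} with the Silverstein equation gives $p\lam = \int t v\lam/(1+t v\lam)\, dH(t)$, hence $v\lam - p\lam = v\lam \int (1 + t v\lam)^{-1} dH(t) > 0$ since $v\lam > 0$ for $\lambda>0$ and $H$ is supported on $[0,\infty)$. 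Continuous mapping then delivers $\widehat\beta_\lambda - \beta \to_{a.s.} 0$.

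The main obstacle is not any single step above but the technical upgrade of the conditional-mean calculations for the quadratic forms in $\pi$ and $\nu$ to genuine almost-sure convergence; this is precisely where the uniform eigenvalue bound on $\Sigma$ and the eighth-moment hypothesis on $W_{ij}$ enter, controlling the fluctuations of $\pi'Z'M_\lambda Z\pi/n$ and $\nu'M_\lambda\nu/n$ around their means. Conceptually, the only genuinely new ingredients beyond Theorem \ref{theorem_ridge} are the algebraic cancellation engineered into the construction of $s_3$ and the use of Silverstein to sign $v\lam - p\lam$.
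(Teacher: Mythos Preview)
Your proposal is correct and follows essentially the same route as the paper: decompose $S_\lambda$, import the $P_\lambda$ limits from Theorem~\ref{theorem_ridge}, compute the $M_\lambda$ limits via trace identities, and observe the exact cancellation in the numerator and of the $\sigma_\nu^2$ terms in the denominator. Two minor differences are worth noting. First, the paper writes the denominator limit as $\alpha^2\bigl(1-(\lambda+\gamma)p\lam\bigr)$, which is algebraically equal to your $\lambda\alpha^2\bigl(v\lam-p\lam\bigr)$ via Lemma~\ref{Lemma_companion}; second, rather than treating the $\pi$--$\nu$ cross terms separately, the paper stacks $\delta=(\pi,\nu/c_n)$ with $c_n=\sqrt{k}\,\sigma_\nu/\alpha$ so that all entries share the common variance $\alpha^2/k$ and a single application of the quadratic-form concentration lemma (Bai--Silverstein / Dobriban--Wager) handles the cross terms automatically --- this is the clean way to discharge the ``main obstacle'' you flag. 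Your Silverstein-based verification that $v\lam-p\lam>0$ is a nice addition; the paper simply invokes the continuous mapping theorem without explicitly checking that the denominator limit is nonzero.
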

\begin{proof}
	See Appendix \ref{appendix_proof_ba_ridge}.
\end{proof}

Using Lemma \ref{Lemma_vlam0}, we show in the following corollary that Theorem \ref{theorem_ba_ridge} also replicates the standard bias-adjusted 2SLS estimator as a special case (if $\gamma<1$ and $\lambda\downarrow 0$). Additionally, it extends this estimator to ridgeless 2SLS (if $\gamma>1$ and $\lambda\downarrow 0$).
\begin{corollary}\label{corollary_lambda0}
	For the ridgeless limit ($\lambda\downarrow 0$), we have
	\begin{align*}
		\text{if $\gamma<1$, then} \ \lim&_{\lambda\downarrow 0} \, S_\lambda = (1-\gamma)P_0 - \gamma M_0=P_0- \gamma I_n\, , \\
		\text{if $\gamma>1$, then} \ \lim&_{\lambda\downarrow 0} \, S_\lambda = - M_0 = P_0- I_n \, ,
	\end{align*}
with $P_0=Z(Z'Z)^{-1}Z'$ and $M_0=I_n-P_0 \, $.
\end{corollary}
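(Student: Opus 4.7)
The plan is to push the scalar and matrix factors of
\[
S_\lambda = \lambda\,\widehat v\lam\, P_\lambda \;-\; \bigl(1-\lambda\,\widehat v\lam\bigr)\, M_\lambda
\]
to their ridgeless limits separately and then combine algebraically.

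First, I would handle the scalar coefficient. Since $\widehat v\lam\to v\lam$ almost surely and the factor $\lambda$ is deterministic, Lemma \ref{Lemma_vlam0} gives $\lambda\,\widehat v\lam\to 1-\gamma$ when $\gamma<1$ and $\lambda\,\widehat v\lam\to 0$ when $\gamma>1$. This pins down the weights multiplying $P_\lambda$ and $M_\lambda$ in each regime, namely $(1-\gamma,\gamma)$ for $\gamma<1$ and $(0,1)$ for $\gamma>1$.

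Second, I would take the matrix limits of $P_\lambda$ and $M_\lambda$. For $\gamma<1$, $Z'Z/n$ is almost surely invertible in the limit (using Assumption \ref{A_hda} and the eigenvalue bound imported in Theorem \ref{theorem_ridge}), so continuity of matrix inversion gives $P_\lambda\to P_0=Z(Z'Z)^{-1}Z'$ and $M_\lambda\to M_0=I_n-P_0$. For $\gamma>1$ the matrix $Z'Z/n$ is singular, so I would instead invoke Lemma \ref{Lemma_ridgerep}, writing $P_\lambda=I_n-\lambda\ZZlamout^{-1}$; since $ZZ'/n$ is a.s.\ invertible in this regime, $\lambda\ZZlamout^{-1}\to 0$, yielding $P_\lambda\to I_n$ and $M_\lambda\to 0$. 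This is the min-norm/pseudoinverse interpretation of $P_0$ alluded to earlier in the paper.

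Finally, I would combine the pieces. For $\gamma<1$, $\lim_{\lambda\downarrow 0} S_\lambda=(1-\gamma)P_0-\gamma M_0$, and substituting $M_0=I_n-P_0$ collapses this to $P_0-\gamma I_n$. For $\gamma>1$, $\lim_{\lambda\downarrow 0} S_\lambda = 0\cdot P_0 - 1\cdot M_0 = -M_0 = P_0-I_n$, under the min-norm reading of $P_0$. The only real ``obstacle'' is bookkeeping: in each regime one has to choose the representation of $P_\lambda$ (either the $k\times k$ or the $n\times n$ resolvent) in which the ridgeless limit is a continuous operation. Beyond the scalar input from Lemma \ref{Lemma_vlam0} and the algebraic identity in Lemma \ref{Lemma_ridgerep}, no additional analysis is required.
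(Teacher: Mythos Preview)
Your proposal is correct and matches the paper's approach. The paper does not spell out a separate proof of this corollary; it simply states that the result follows from Lemma~\ref{Lemma_vlam0}, which is precisely the scalar input you use, and the matrix limits of $P_\lambda$ and $M_\lambda$ (with Lemma~\ref{Lemma_ridgerep} handling the $\gamma>1$ case via the $n\times n$ resolvent) are the natural complements the paper leaves implicit.
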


Figure \ref{fig_bias_ba2sls} shows that $\widehat\beta_\lambda$ is (by construction) unbiased for all values of $\lambda$ (including $\lambda\downarrow 0$). The bias-adjustment works also if $\gamma>1\, $ (given that $\lambda\neq 0$). 

\begin{figure}[h!]
	\centering
	\includegraphics[width=0.50\textwidth]{./bias_ba2sls_ar1_75}\hfill
	\includegraphics[width=0.50\textwidth]{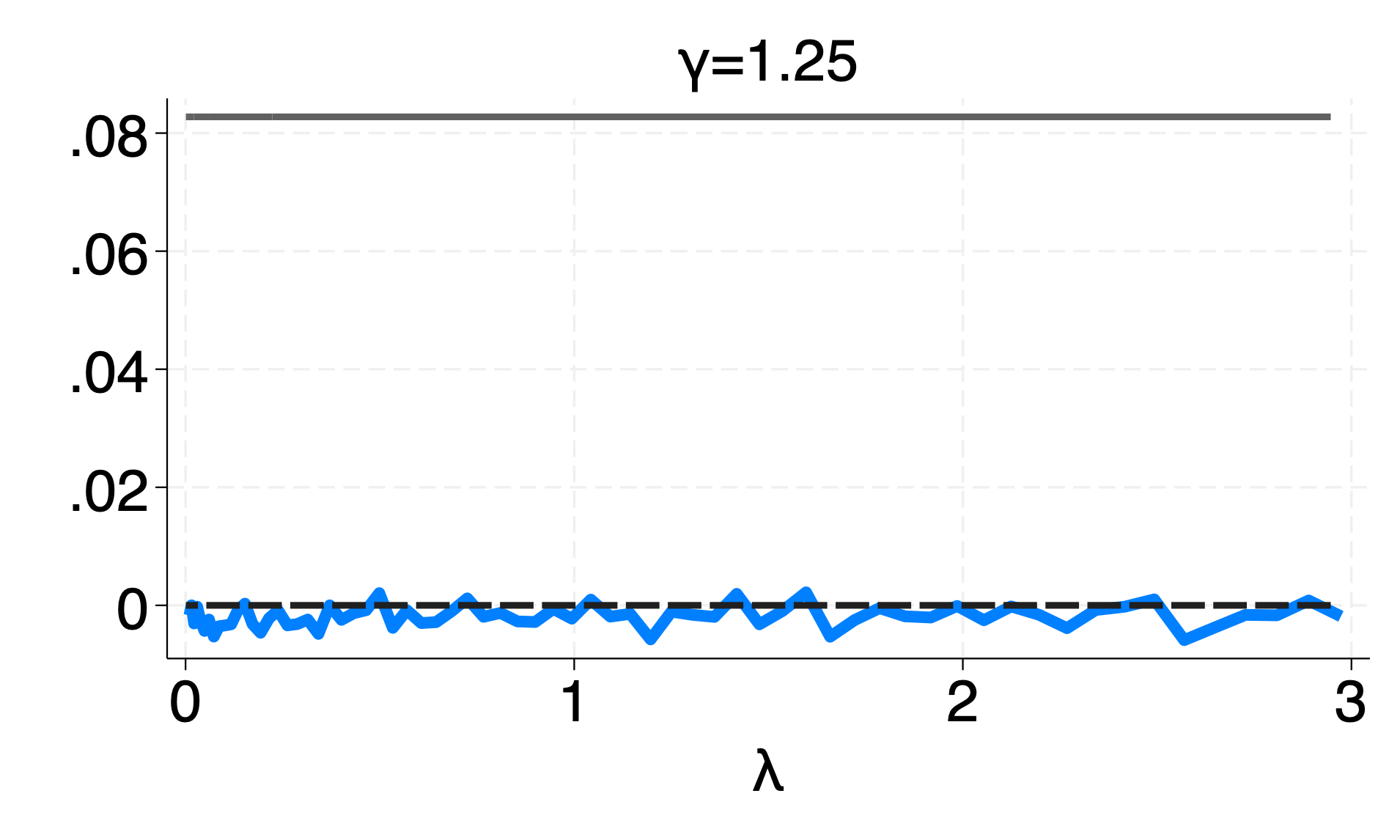}
	\caption{Bias approximation and simulation results for bias-adjusted 2SLS-Ridge (average over 500 replications); dashed line shows RMT results from Theorem \ref{theorem_ba_ridge}; solid grey line indicates OLS bias; AR-1 model $\Sigma_{ij}=0.5^{\vert i-j\vert}$. \label{fig_bias_ba2sls}}
\end{figure}

It is well known that the standard bias-adjusted 2SLS estimator (i.e., $\lambda=0$ and $\gamma<1$) can perform poorly in finite samples. A reason for this is that $\gamma$ in Corollary \ref{corollary_lambda0} can be interpreted as costs for the bias-adjustment, which reduce the signal to $\alpha^2(1-\gamma)$. By inspecting the limit of the denominator of bias-adjusted 2SLS-Ridge derived in the proof of Theorem \ref{theorem_ba_ridge} (see Appendix \ref{appendix_proof_ba_ridge}), we see two negative terms: 
\begin{align*}
	\frac{x'S_\lambda x}{n}  \to_{a.s.}  \alpha^2 \big( 1-(\lambda+\gamma) p\lam \big)\equiv \alpha^2 f\lam \, .
\end{align*}
The first term ($-\lambda p\lam$), which appears also in the asymptotic bias approximation of $\widetilde\beta_\lambda$ (see Theorem \ref{theorem_ridge}),  stems from using Ridge  and the second term ($-\gamma p\lam$) stems from the bias-adjustment. In both cases, $p\lam$ can be interpreted as a price per unit, which is unity if $\lambda=0$ and declines with increasing $\lambda$. This decline leads to a trade-off, allowing for less costly bias-adjustments (with respect to the signal term). 

The following corollary discusses this trade-off close to the ridgeless limit and shows under what conditions we can expect a stronger signal term if $\lambda$ increases. Compared with standard bias-adjusted 2SLS, the trade-off turns out to be positive if $\gamma \E\left[ Y^{-1} \right]\geq 1$. This condition simplifies to $\gamma>1/2 \, $ for isotropic instruments (i.e., when $\E\left[ T^{-1} \right]=1$) because from the Marchenko-Pastur equation, we know $\E(Y^{-1})=\E(T^{-1})/(1-\gamma)$.\footnote{In the next section, we additionally show that the asymptotic variance of $\widehat\beta_{\lambda}$ with $\lambda\neq 0$ turns out to be smaller than the asymptotic variance of the standard bias-adjusted 2SLS ($\lambda=0$) under the weaker condition $\E[Y^{-1}]\geq1$, which, for the isotropic case, is always fulfilled ($\gamma>0$).}

\begin{corollary}\label{corollary_badj_derivative}
	If $\gamma \E[Y^{-1}]\geq1$, $\gamma<1$ and $\lambda v\lam \neq 1$, then 
	\begin{align*}
		&\lim_{\lambda\downarrow 0} f\lam=\lim_{\lambda\downarrow 0} \big(1-(\lambda+\gamma) p\lam\big)=1-\gamma  \\
		&\lim_{\lambda\downarrow 0} f'\lam = -\lim_{\lambda\downarrow 0} \big((\lambda+\gamma)p'\lam + p\lam\big) \geq 0 \, .
	\end{align*}
\end{corollary}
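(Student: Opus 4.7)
The plan is to evaluate both limits by direct substitution, using $p\lam=1-\lambda m\lam$ in the definition $f\lam=1-(\lambda+\gamma)p\lam$, together with Lemma \ref{Lemma_vlam0} and the identity for $p'\lam$ recorded in Corollary \ref{corollary_amplifier}.

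For the first limit, I would begin by observing that $\gamma<1$ places us in case 1 of Lemma \ref{Lemma_vlam0}, so part 1b yields $\lim_{\lambda\downarrow 0}\lambda m\lam=0$ and hence $p\lam\to 1$. Substituting into $f\lam$ kills the $\lambda p\lam$ contribution and leaves $\gamma p\lam\to\gamma$, so $\lim_{\lambda\downarrow 0} f\lam=1-\gamma$ as claimed.

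For the derivative, the product rule applied to $f\lam$ in $\lambda$ immediately reproduces $f'\lam=-\bigl(p\lam+(\lambda+\gamma)p'\lam\bigr)$, matching the expression in the statement. It therefore suffices to compute $\lim_{\lambda\downarrow 0}[-p'\lam]$. Using the identity $-p'\lam=m\lam-\lambda m'\lam$ from Corollary \ref{corollary_amplifier}, I would treat the two summands separately: $m\lam=E[(Y+\lambda)^{-1}]\to E[Y^{-1}]$ by monotone convergence, while $\lambda m'\lam=\lambda E[(Y+\lambda)^{-2}]\to 0$ by dominated convergence, invoking the elementary pointwise bound $\lambda/(Y+\lambda)^2\leq 1/Y$ with dominating integrability provided by $E[Y^{-1}]<\infty$ (implicit in the hypothesis). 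Combining,
\begin{align*}
\lim_{\lambda\downarrow 0} f'\lam = -1+\gamma E[Y^{-1}] \geq 0,
\end{align*}
where the inequality is exactly the hypothesis $\gamma E[Y^{-1}]\geq 1$.

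The only delicate step is the dominated convergence argument for $\lambda m'\lam$, which requires $E[Y^{-1}]<\infty$; this finiteness is implicitly built into the hypothesis $\gamma E[Y^{-1}]\geq 1$. Everything else reduces to routine substitution of limits already recorded in Lemma \ref{Lemma_vlam0} and Corollary \ref{corollary_amplifier}, making the overall argument essentially a bookkeeping exercise that mirrors the derivation of Corollary \ref{corollary_amplifier}.
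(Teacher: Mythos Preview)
Your proposal is correct and follows essentially the same route as the paper's proof: both use Lemma \ref{Lemma_vlam0} for the first claim, compute $f'\lam$ via the product rule, and pass to the limit with dominated convergence to obtain $\lim_{\lambda\downarrow 0} f'\lam=\gamma\,\E[Y^{-1}]-1$. The only cosmetic difference is that the paper first rewrites $f'\lam=\gamma\,\E\!\left[\frac{Y}{(Y+\lambda)^2}\right]-\E\!\left[\frac{Y^2}{(Y+\lambda)^2}\right]$ before taking the limit, whereas you evaluate $\lim p\lam$ and $\lim p'\lam$ separately and then combine; the substance is identical.
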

\begin{proof}
	See Appendix \ref{appendix_proof_corollaries}.
\end{proof}

\begin{figure}[h!]
	\centering
	\includegraphics[width=0.50\textwidth]{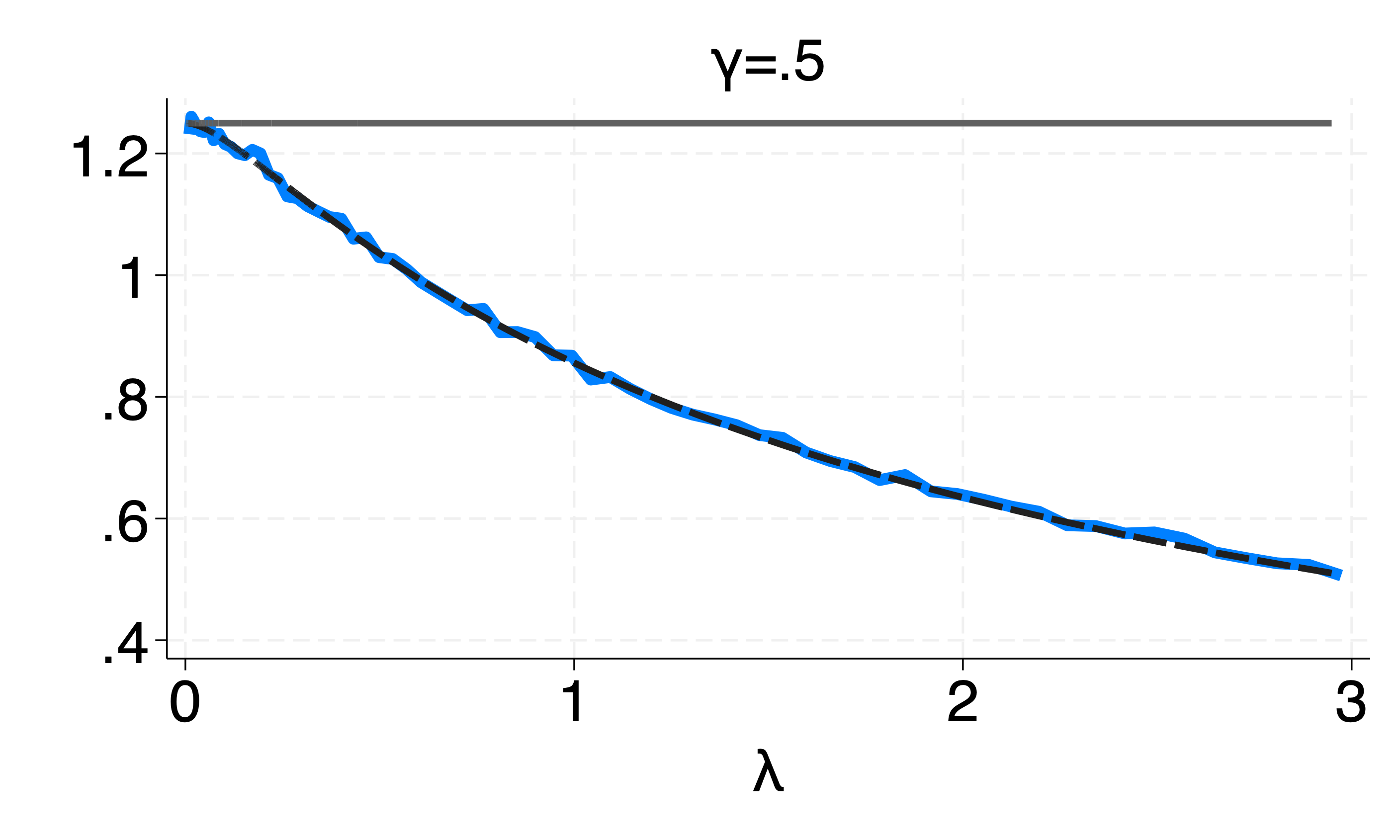}\hfill
	\includegraphics[width=0.50\textwidth]{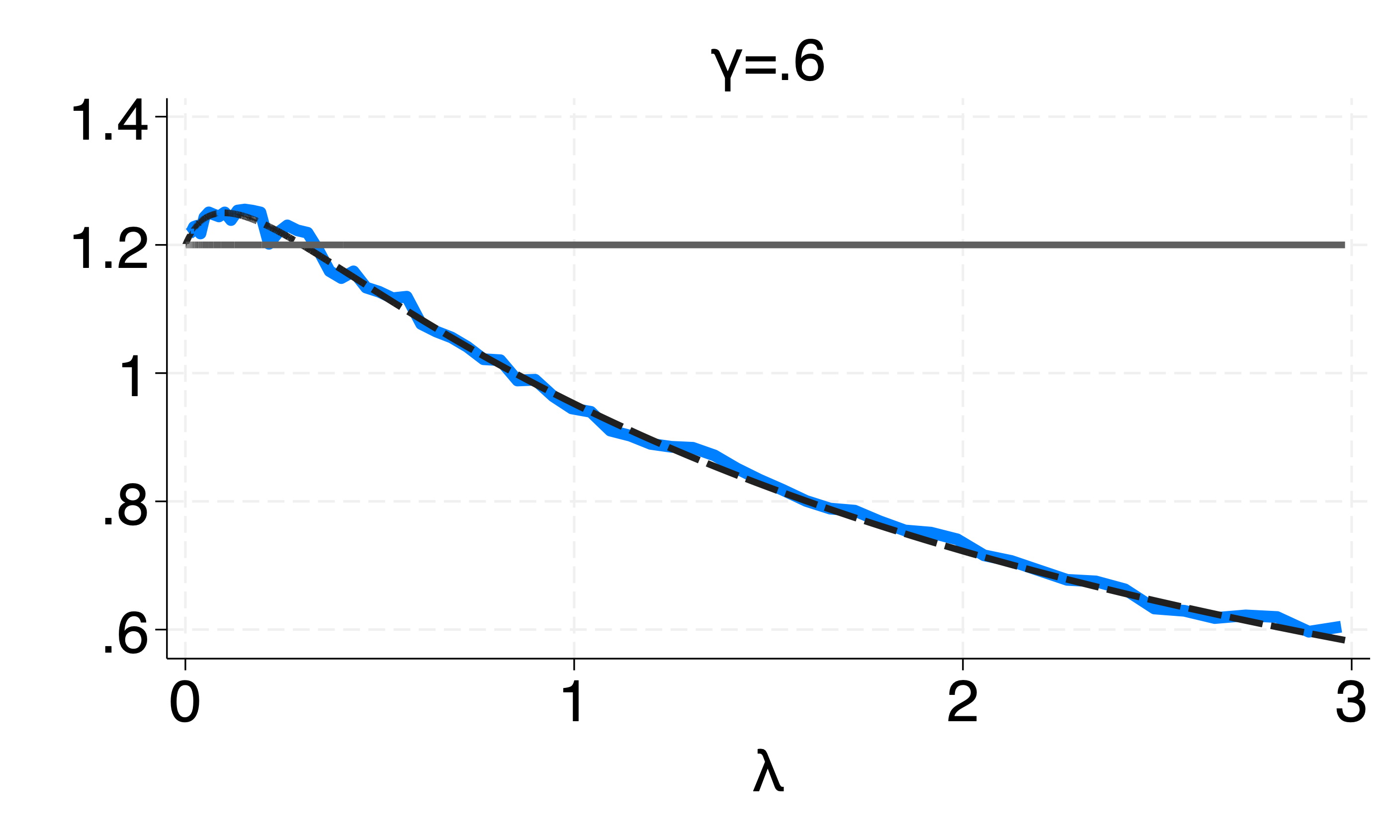}\hfill
	\includegraphics[width=0.50\textwidth]{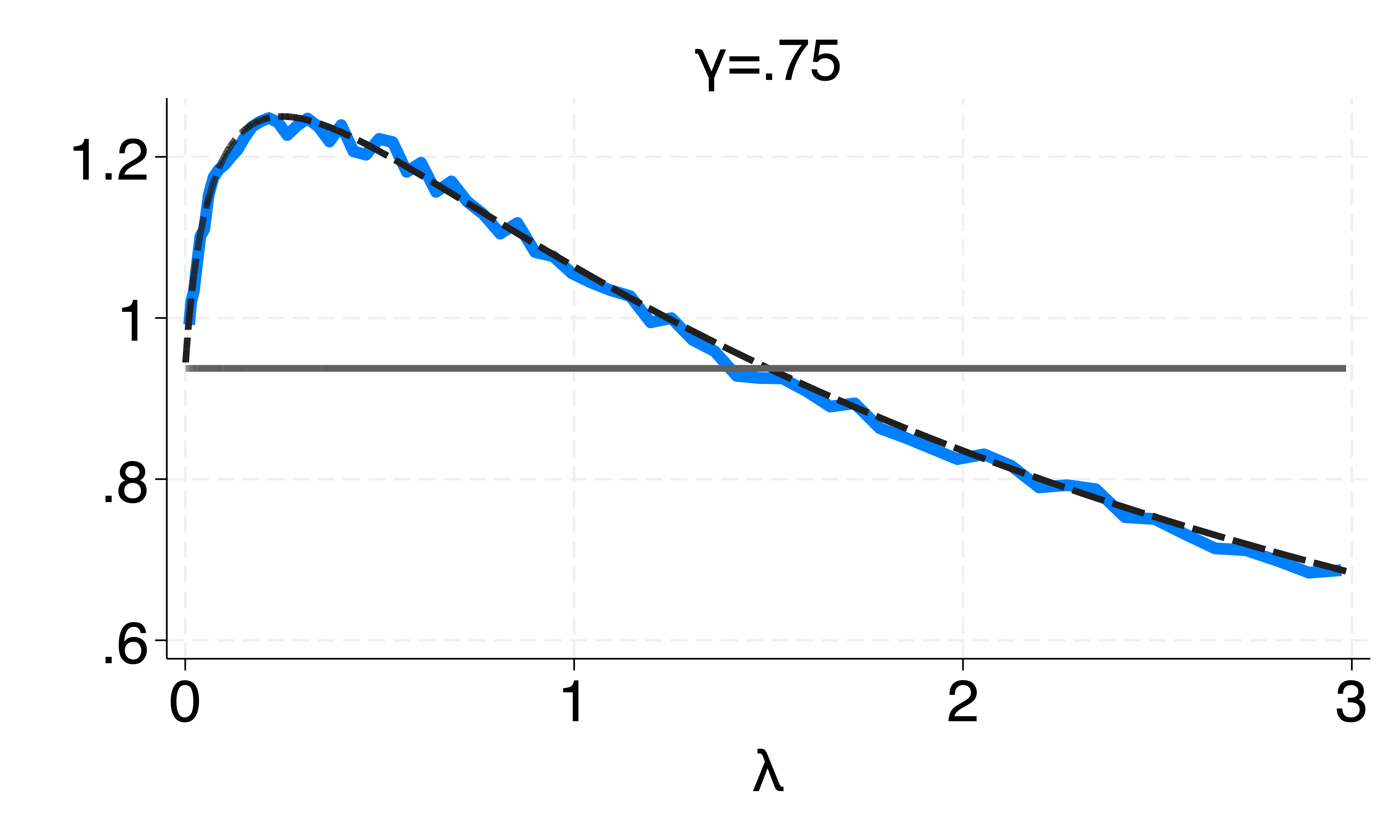}\hfill
	\includegraphics[width=0.50\textwidth]{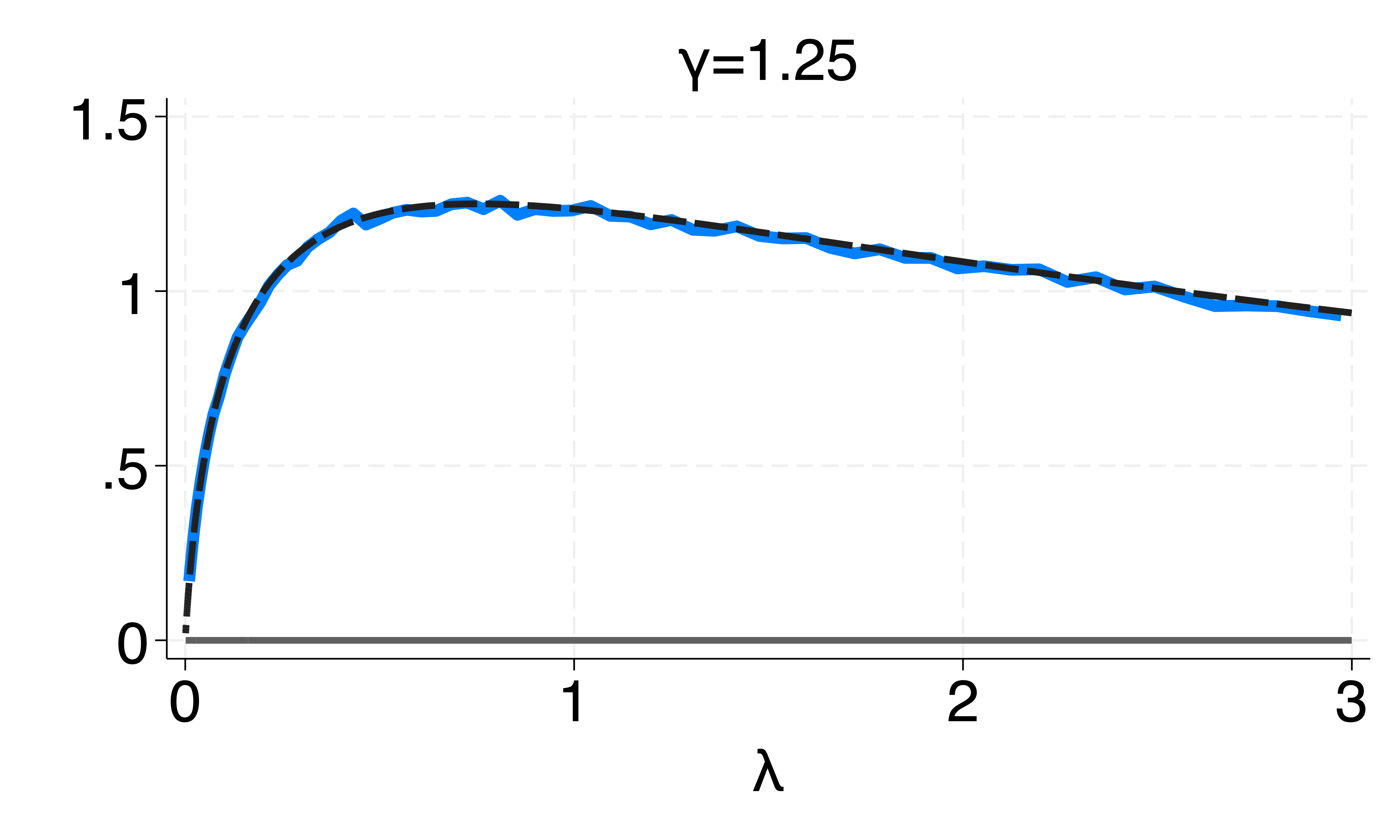}
	\caption{Signal approximation and simulation results for $x'S_\lambda x/n$ (average over 500 replications) and $\alpha^2 f\lam$ (dashed line); solid grey line indicates signal from standard bias-adjusted 2SLS, i.e. $\alpha^2(1-\gamma)$ if $\gamma<1$ and zero otherwise;  $\Sigma=I_k$. \label{fig_bias_signal}}
\end{figure}
Figure \ref{fig_bias_signal} illustrates the trade-off described in Corollary \ref{corollary_badj_derivative} for different values of $\gamma$ and $\lambda$. If $\gamma>1/2$, the signal can be improved compared to standard bias-adjusted 2SLS. If $\gamma>1$, we know for standard bias-adjusted 2SLS that $\lim_{\lambda\downarrow 0} f\lam = 0$ (using Lemma \ref{Lemma_vlam0}) implying that standard bias-adjusted 2SLS is infeasible. However, a positive value of $\lambda$ can reestablish the signal, which enables us to identify the causal effect even if $\gamma>1$. We observe similar patterns for the AR-1 covariance matrix (see Figure \ref{figappendix_bias_signal} in Appendix \ref{appendix_simulation}).

\section{Variance and Choice of the Tuning Parameter}

\subsection{Variance Estimation and Asymptotic Variance}

For inference, we consider \cite{bekker1994alternative} standard errors. Let $\widehat\varepsilon=y-x\widehat\beta_\lambda$, $\widehat\sigma_\varepsilon^2=\widehat\varepsilon'\widehat\varepsilon/n$, and $\widetilde{x} = x + \widehat\varepsilon \, (\widehat\varepsilon'x)/(\widehat\varepsilon'\widehat\varepsilon)$. Inspired by \cite{hansen2008estimation}, we can estimate the variance of the standard bias-adjusted 2SLS estimator by
\begin{align*}
	\widehat\V(\widehat\beta_0) = \widehat\sigma_\varepsilon^2 \, \frac{x'S_0S_0\widetilde{x}}{(x'S_0x)^2} \, .
\end{align*}
Following this structure, we propose to estimate the variance of the bias-adjusted 2SLS-Ridge estimator by

\begin{align*}
	\widehat\V(\widehat\beta_\lambda) = \widehat\sigma_\varepsilon^2 \, \frac{x'S_\lambda S_\lambda \widetilde{x}}{(x'S_\lambda x)^2} \, .
\end{align*}
The corresponding asymptotic variance of $\widehat\beta_\lambda$ is (see Appendix \ref{appendix_variance} for a derivation)
\begin{align}\label{eq:var}
	\widehat\V(\widehat\beta_\lambda) \to_{a.s.} \frac{1}{n}\frac{\sigma^2_\varepsilon \sigma_\nu^2}{\alpha^4} \, \left( \gamma F - \frac{ \gamma F \lambda^2 \big[ p' + p^2 \big]}{f^2} + \frac{ \lambda^2 \big[ v' - v^2 \big]}{f^2} \left( 1 + \rho_{\varepsilon\nu}^2 \right) \right) \, ,
\end{align}


which for $\lambda\downarrow 0$ coincides with the well-known asymptotic variance of the standard bias-adjusted 2SLS estimator under normality of the errors \citep[compare, e.g., ][]{bekker1994alternative, anatolyev2013instrumental,kolesar2015identification,anatolyev2019many}

\begin{align*}
	\V(\widehat\beta_0) &= \frac{1}{n}\frac{\sigma^2_\varepsilon \sigma_\nu^2}{\alpha^4} \, \left( \gamma F + \frac{\gamma}{1-\gamma} \left( 1 + \rho_{\varepsilon\nu}^2 \right) \right) \, .
\end{align*}

In Appendix \ref{appendix_proof_variance_standard_ba2SLS}, we show that for any positive value of $\lambda$, the asymptotic variance of $\widehat\beta_\lambda$ is always smaller than the asymptotic variance of $\widehat\beta_0$. In Figure \ref{fig_variance}, we present results from Monte-Carlo simulations over a grid of different $\lambda$ values. The results illustrate that the observed variation of $\widehat\beta_\lambda$, and the estimated Bekker-type variance are close to the asymptotic variance of the bias-adjusted 2SLS-Ridge estimator. To showcase convergence of the observed variance to the asymptotic variance, we use $n=1000$ (for results using $n=200$, see Figure \ref{figappendix_variance} in Appendix \ref{appendix_simulation}). For settings where $\gamma < 1$, we further display the asymptotic variances of the standard bias-adjusted 2SLS and the limited information maximum likelihood estimator (Liml) as benchmarks. As suggested by our asymptotic results, we see considerable efficiency improvements compared to the standard bias-adjusted 2SLS estimator, and, in some settings also to the Liml estimator. In Section \ref{section_lambda_star}, we further illustrate for which settings bias-adjusted 2SLS-Ridge outperforms Liml regarding estimation accuracy.

\begin{figure}[t!]
	\centering
	\includegraphics[width=0.50\textwidth]{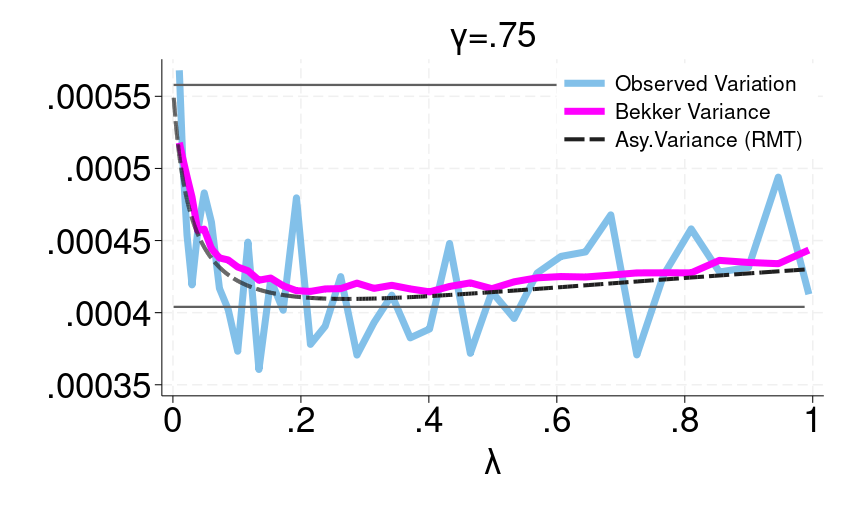}\hfill
	\includegraphics[width=0.50\textwidth]{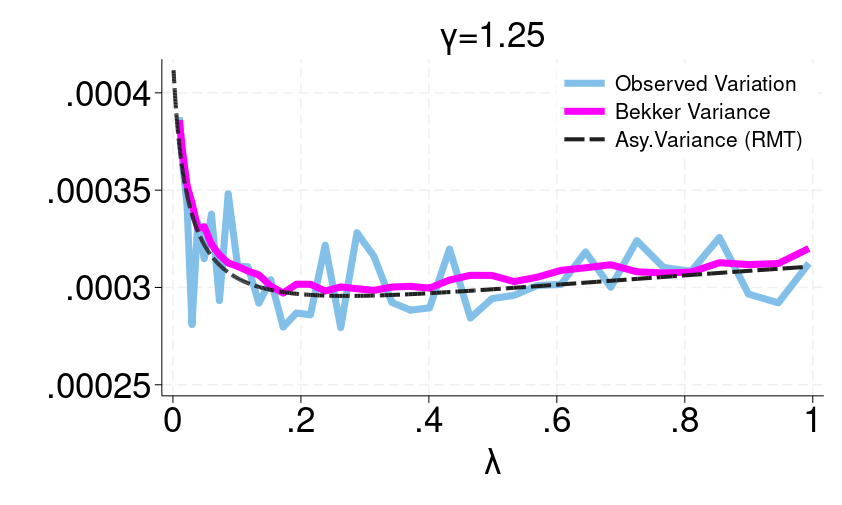}\hfill
	\includegraphics[width=0.50\textwidth]{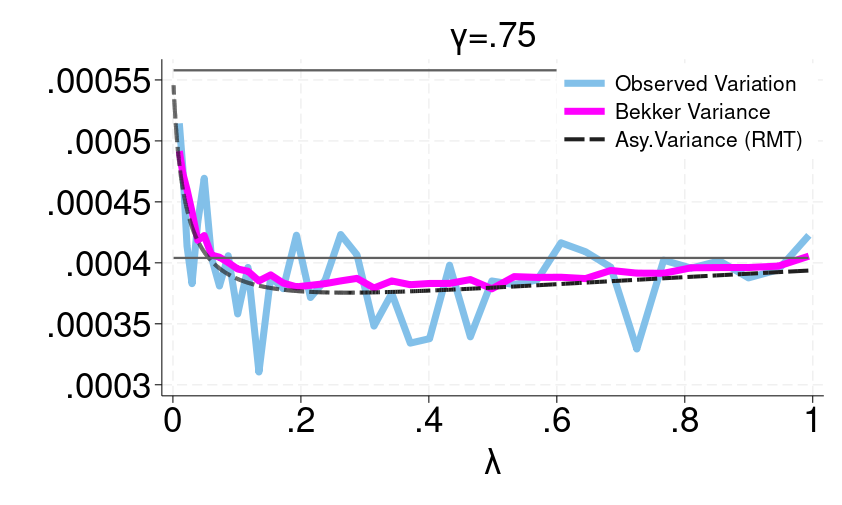}\hfill
	\includegraphics[width=0.50\textwidth]{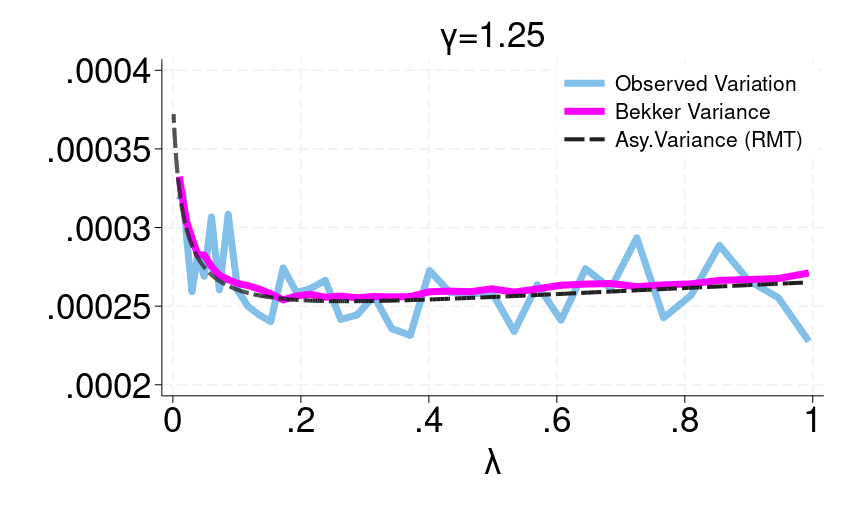}
	\caption{Observed variation and estimated variance of $\widehat\beta_\lambda$ (average over 500 replications) and asymptotic variance (dashed line) for $n=1000$. The upper and lower horizontal lines depict the asymptotic variance of the standard bias-adjusted 2SLS and Liml, respectively; top: $\Sigma=I_k$; bottom: AR-1 model $\Sigma_{ij}=0.5^{\vert i-j\vert}$. \label{fig_variance}}
\end{figure}

\bigskip

To further illustrate the validity of the variance estimates, Figure \ref{fig_rf} shows for different settings that under the null-hypothesis, observed rejection frequencies for testing at the 5\%-level are always close to their nominal value.  

\begin{figure}[h!]
	\centering
	\includegraphics[width=0.50\textwidth]{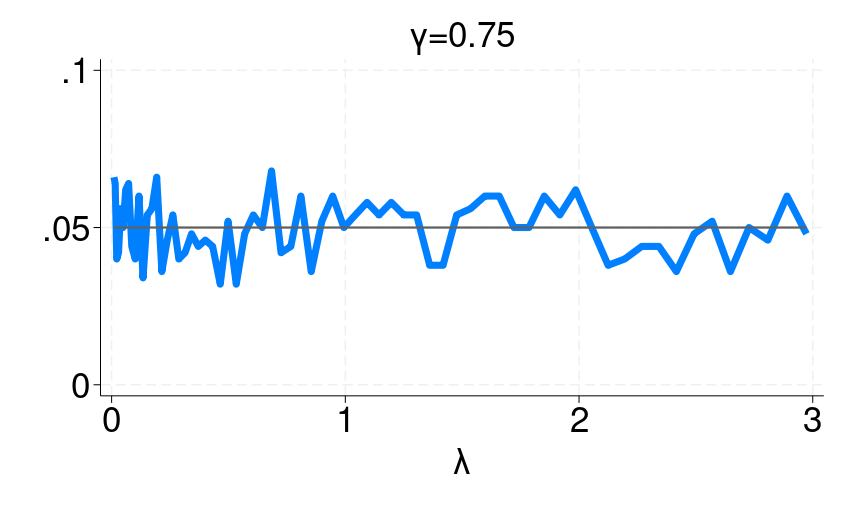}\hfill
	\includegraphics[width=0.50\textwidth]{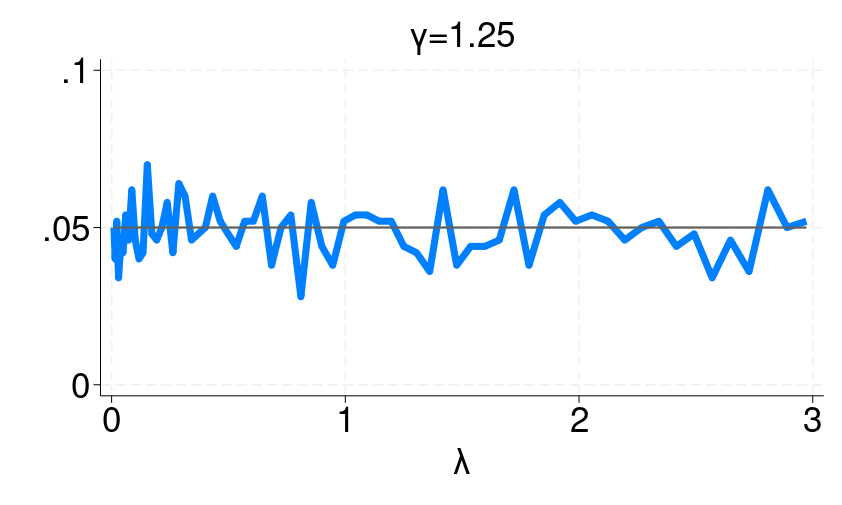}\hfill
	\includegraphics[width=0.50\textwidth]{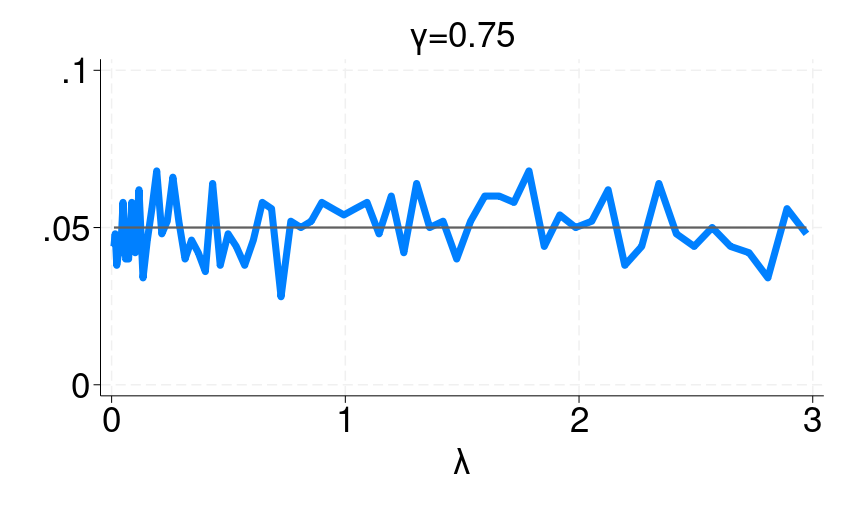}\hfill
	\includegraphics[width=0.50\textwidth]{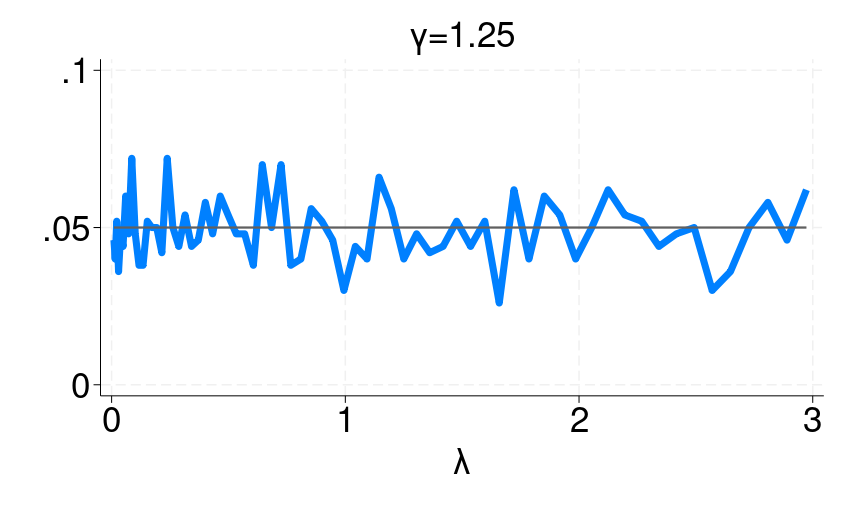}
	\caption{Rejection frequencies (average over 500 replications) for a t-test with null-hypothesis $\beta = 0$ for $n=1000$. top: $\Sigma=I_k$; bottom: AR-1 model $\Sigma_{ij}=0.5^{\vert i-j\vert}$. \label{fig_rf}}
\end{figure}

\bigskip


\subsection{Optimal Choice of the Tuning Parameter \label{section_lambda_star}}
Since $\widehat\beta_\lambda$ is consistent for all values of $\lambda$, we aim to find the optimal value of $\lambda$ that minimizes the variance of $\widehat\beta_\lambda$. To find this value theoretically, we take the derivative of the asymptotic variance (Equation \ref{eq:var}) and set it to zero. This gives us the following expression (for a complete derivation, see Appendix \ref{appendix_proof_lambda_star}):
\begin{equation}
    \begin{split}
           \frac{1}{n}\frac{\sigma^2_{\epsilon} \sigma^2_\nu F }{\alpha^4} & \bigg( \lambda - \frac{1+\rho_{\epsilon \nu}^2}{F} \bigg) * \\ & \bigg[ \frac{4 \frac{\lambda^3}{\gamma} vv' - 2 \frac{\lambda^3}{\gamma} v^3 - \frac{\lambda^3}{\gamma} v'' + \lambda^3 \left(1 + \frac{\lambda}{\gamma}\right) vv'' - 2 \lambda^3 \left(1 + \frac{\lambda}{\gamma}\right) v'^2 }{f^3} \bigg] = 0
    \end{split}
\end{equation}

This expression is fulfilled for $\lambda^* = \frac{1+\rho_{\varepsilon\nu}^2}{F}$. Interestingly, for $\rho_{\epsilon \nu}=0$, this result coincides with the optimal $\lambda^*$ in a single equation model which is equal to $\frac{1}{F}$ \citep{dobriban_wager}.

To obtain $\lambda^*$ empirically, we estimate $\rho_{\varepsilon\nu}$ using the formula for the OLS bias \citep{hahn2002notes}:
\begin{align}\label{eq:rhoevhat}
    \hat{\rho}_{\epsilon \nu} = \left( \hat{\beta}_{ols} - \hat{\beta}_{\downarrow 0} \right) \left[ (\hat{F} - 1) * \gamma + 1 \right]
\end{align}
with bias-adjusted 2SLS ($\hat{\beta}_{\downarrow 0}$) as consistent but inefficient preliminary estimate.

Finally, we can compute the optimal $\lambda$ using:

\begin{align}\label{eq:lhat}
    \hat{\lambda}^* = \frac{1+\hat{\rho}_{\epsilon \nu}^2}{\hat{F}} 
\end{align}

Figure \ref{fig_lambda_star} in Appendix \ref{appendix_simulation} illustrates for different settings that estimating $\hat\lambda^*$ via Equations \ref{eq:rhoevhat} and \ref{eq:lhat} gives a reliable estimate of $\lambda^* \, $.

\begin{figure}[b!]
	\centering
	\includegraphics[width=\textwidth]{./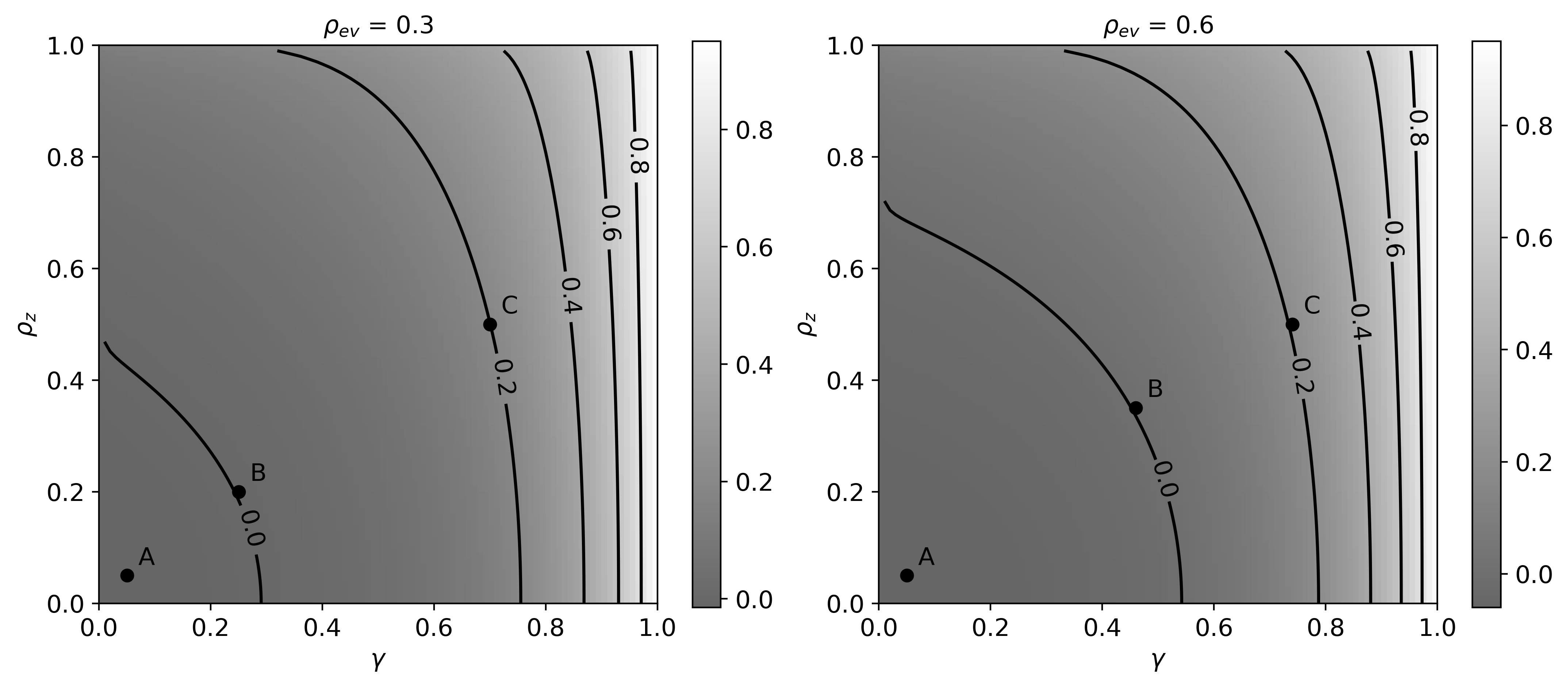}
	\caption{Contour plots of the relative difference in the asymptotic variance of Liml vs. bias-adjusted 2SLS-Ridge; positive values indicate that bias-adjusted 2SLS-Ridge is more efficient than Liml; AR-1 model; $F=5$; Black dots mark settings that are shown in Table \ref{tab_mc}. \label{fig_contour}}
\end{figure}

Next, we compare the asymptotic variance formula of bias-adjusted 2SLS-Ridge using the optimal tuning parameter $\lambda^*$ (see Appendix \ref{appendix_proof_lambda_star} for a derivation) to that of Liml:
\begin{align*}
	\V(\widehat\beta_\lambda) = \frac{1}{n}\frac{\sigma^2_\varepsilon \sigma_\nu^2}{\alpha^4} \, \left( \gamma F - \frac{ \gamma F \lambda^2 \big[ p' + p^2 \big]}{f^2} + \frac{ \lambda^2 \big[ v' - v^2 \big]}{f^2} \left( 1 + \rho_{\varepsilon\nu}^2 \right) \right) \, ,
\end{align*}
\begin{align*}
	\V(\widehat\beta_{\text{Liml}}) &= \frac{1}{n}\frac{\sigma^2_\varepsilon \sigma_\nu^2}{\alpha^4} \, \left( \gamma F - \frac{\gamma}{1-\gamma} \left( 1 + \rho_{\varepsilon\nu}^2 \right) \right) \, .
\end{align*}

In Figure \ref{fig_contour}, we plot the relative difference of the asymptotic variances of the two estimators over a grid of $\gamma$ and $\rho_z$ values for two distinct values of $\rho_{\epsilon \nu}$. We can see that the improvement in efficiency over Liml can be substantial, especially when the number of instruments approaches the number of observations.

\begin{table}[b!]
	\centering
	\begin{tabular}{c||l|rrr|rrr}
	\multicolumn{1}{c}{} & & \multicolumn{3}{c}{$\rho_{\epsilon \nu} = 0.3$} & \multicolumn{3}{|c}{$\rho_{\epsilon \nu} = 0.6$}\\	\hline \hline
	 \multicolumn{1}{c}{} & & \textbf{Mn.B.} & \textbf{emp.SD} & \textbf{IQR} & \textbf{Mn.B.} & \textbf{emp.SD}  & \textbf{IQR} \\
	\hline
	\multirow{5}{*}{A} & OLS & 0.23919 & 0.02896 & 0.03886 & 0.48079 & 0.03042 & 0.03951\\
	 & 2SLS & 0.04736 & 0.05788 & 0.08017 & 0.09917 & 0.05807 & 0.07177\\
	 & BA2SLS & -0.00593 & 0.07249 & 0.10037 & -0.00790 & 0.07748 & 0.09600\\
	 & Liml & -0.00367 & 0.07017 & 0.09774 & -0.00393 & 0.07180 & 0.08951\\
	 & BA2SLS-R & -0.00604 & 0.07230 & 0.10160 & -0.00767 & 0.07735 & 0.09541\\
	\hline
	\multirow{5}{*}{B} & OLS & 0.13385 & 0.02223 & 0.02967 & 0.18240 & 0.01939 & 0.02570\\
	 & 2SLS & 0.05041 & 0.02650 & 0.03343 & 0.10061 & 0.01976 & 0.02596\\
	 & BA2SLS & -0.00012 & 0.03319 & 0.04274 & 0.00013 & 0.02660 & 0.03507\\
	 & Liml & 0.00006 & 0.03281 & 0.04139 & -0.00038 & 0.02346 & 0.03204\\
	 & BA2SLS-R & -0.00005 & 0.03274 & 0.04196 & -0.00014 & 0.02510 & 0.03304\\
	\hline
	\multirow{5}{*}{C} & OLS & 0.06722 & 0.01526 & 0.01987 & 0.12809 & 0.01644 & 0.02171\\
	 & 2SLS & 0.05054 & 0.01570 & 0.02160 & 0.10063 & 0.01641 & 0.02142\\
	 & BA2SLS & 0.00029 & 0.02295 & 0.02958 & 0.00100 & 0.02454 & 0.03242\\
	 & Liml & 0.00039 & 0.02212 & 0.02842 & 0.00063 & 0.02102 & 0.02837\\
	 & BA2SLS-R & 0.00057 & 0.02002 & 0.02603 & 0.00041 & 0.02042 & 0.02721\\
\end{tabular}
	\caption{Results from Monte-Carlo simulations with 1000 replications for $n=1000$, $\beta = 0$, $F=5$; in settings A, Liml is expected to be more efficient than bias-adjusted 2SLS-Ridge, in B, both estimators should show equal performance, and in C, bias-adjusted 2SLS-Ridge should outperform Liml with respect to estimation accuracy; Columns display the following quality measures: mean bias, empirical standard-deviation, interquartile range.} 
	\label{tab_mc}	
\end{table}

Given Figure \ref{fig_contour}, we construct three settings for two different values of $\rho_{\epsilon \nu}$ each. The settings correspond to different relative performances of the two estimators. In setting A, our estimator is expected to perform worse than Liml with respect to estimation accuracy. Both estimators are expected to show similar performance in setting B, whereas in setting C, bias-adjusted 2SLS-Ridge should outperform Liml. In Table \ref{tab_mc}, we report estimation results for 2SLS, standard bias-adjusted 2SLS, Liml, and bias-adjusted 2SLS-Ridge with $\hat{\lambda}^*$ for each of these settings. While all bias-adjusted estimators show little bias, Liml and bias-adjusted 2SLS-Ridge outperform the standard bias-adjusted 2SLS estimator in terms of accuracy in every setting. The relative difference in observed variation between Liml and bias-adjusted 2SLS-Ridge reflects the asymptotic results, although bias-adjusted 2SLS-Ridge performs slightly worse than expected for the case when $\rho_{\epsilon \nu} = 0.6$.


\section{Conclusion \label{Section_conclusion}}

Instrumental variables estimation performs poorly when the number of instruments increases. In these settings, standard IV estimators are biased towards the OLS estimate. Traditional bias-adjustments can be motivated by the Silverstein equation. This link to random matrix theory allows us to improve IV estimation even in settings where no closed-form expression of the Silverstein equation exists. Bias-adjustments are costly in the sense that they reduce the overall signal strength of the instruments. We show that Ridge estimation of the first-stage parameters reduces the implicit price of bias-adjustments. This leads to a trade-off, allowing for less costly estimation of the causal effect which comes along with improved asymptotic properties. Our theoretical results nest existing methods for bias approximation and adjustment with OLS in the first-stage regression. Moreover, the results generalize to settings where the number of instruments is larger than the number of observations, which has not been discussed in the IV literature so far. As a byproduct, we derive the optimal tuning parameter of Ridge regressions in simultaneous equations models, which comprises the well-known results for single equation models as special case with uncorrelated error terms. 

\clearpage

\bibliographystyle{chicago} 
\bibliography{literature}



\begin{appendices}

\newpage
\section{Proof of Lemmas \label{appendix_proof_lemmas}}

\subsection*{Proof of Lemma \ref{Lemma_ridgerep}}

\begin{proof} \vspace{-0.5cm}
	\begin{eqnarray*}
		I_n&=& \ZZlamout\ZZlamout^{-1} \\
		&=& (ZZ'/n)\ZZlamout^{-1}+\lambda \ZZlamout^{-1} \\
		&=& Z\ZZlam^{-1}Z'/n + \lambda \ZZlamout^{-1}
	\end{eqnarray*}
	The last equation, which proves the Lemma, uses the so-called ``kernel trick'', i.e. $\ZZlam^{-1} Z' = Z'\ZZlamout^{-1}\,$.
\end{proof}

\subsection*{Proof of Lemma \ref{Lemma_companion}}

\begin{proof}
	This identity can be found in many textbooks \citep[see e.g. Lemma 2.4 in][]{couillet2022random}. It follows because the ESDs of $\widehat\Sigma$ and $\underline{\widehat\Sigma}$ have the same nonzero eigenvalues and differ only by $\vert n-k \vert$ zero eigenvalues for the larger matrix, that is
	\begin{eqnarray*}
		v\lam = \gamma m\lam + (1-\gamma)\frac{1}{\lambda} \, .
	\end{eqnarray*}
	Multiplying by $\lambda$ and rearranging leads to the claimed formula. 
\end{proof}

\subsection*{Proof of Lemma \ref{Lemma_vlam0}}

\begin{proof}
	The proof follows from Lemma 2.3 in \cite{dobriban_wager}. They use the dominated convergence theorem to swap limits and expectations, which allows us to take the limit as $\lambda\downarrow 0$ inside the expectation. If $\gamma<1$, we know that $Y$ is bounded away from 0 ($Y>c>0$ for some $c$) and the dominated convergence theorem is hence applicable to $m\lam$. If $\gamma>1$, we know that $\underline{Y}$ is bounded away from 0 ($\underline{Y}>c>0$ for some $c$) and the dominated convergence theorem is applicable to $v\lam$. We first consider the case with $\gamma<1$, then
	\begin{eqnarray*}
		\lim_{\lambda\downarrow 0}\lambda m\lam=\lim_{\lambda\downarrow 0} \E \left[ \frac{\lambda}{Y+\lambda} \right]=\E \left[ \frac{0}{Y} \right] =0 \, ,
	\end{eqnarray*}
	which proves 1b). 1a) follows immediately from this result using Lemma \ref{Lemma_companion}. Similarly, if $\gamma>1$, then 
	\begin{eqnarray*}
		\lim_{\lambda\downarrow 0}\lambda v\lam=\lim_{\lambda\downarrow 0} \E \left[ \frac{\lambda}{\underline{Y}+\lambda} \right]=\E \left[ \frac{0}{\underline{Y}} \right] =0 \, ,
	\end{eqnarray*}
	which proves 2a). 2b) follows immediately from this result using Lemma \ref{Lemma_companion}.
\end{proof}

\newpage
\section{Proof of Theorem \ref{theorem_ridge} \label{appendix_proof_ridge}}

\begin{proof}
	By definition of the linear model (\ref{model}), we can write
	\begin{align*}
		\widetilde\beta_{\lambda} = \beta + \frac{x'P_{\lambda}\varepsilon}{x'P_{\lambda}x } \, .
	\end{align*}
	To establish convergence of the remainder term, we argue that the quadratic terms involving $P_{\lambda}$ concentrate around their means and then use results from RMT to find the limits of those means. We begin with the denominator. 
	
	Following \cite{hastie2022supplement}, let $x = Z\pi + \nu$ and $c_n = \sqrt{k}(\sigma_\nu/\alpha)$ and observe that
	\begin{align*}
		\frac{x'P_{\lambda}x}{n} &= (\pi, \nu)^{\prime} \left(\frac{1}{n} \begin{bmatrix} Z \\ I_n \end{bmatrix}^{\prime} P_\lambda \begin{bmatrix} Z \\ I_n \end{bmatrix}\right)(\pi, \nu) \\
		&= \underbrace{(\pi, \nu/c_n)^{\prime}}_{\delta^{\prime}} \underbrace{\left(\frac{1}{n} \begin{bmatrix} Z \\ c_nI_n \end{bmatrix}^{\prime} P_\lambda \begin{bmatrix} Z \\ c_nI_n \end{bmatrix}\right)}_{A} \underbrace{(\pi, \nu/c_n)}_{\delta}.
	\end{align*}
	Because $\delta$ and $A$ are independent and $\delta$ has independent entries with mean zero and common variance $\alpha^2/k$, we can use the almost sure convergence of quadratic forms, from Lemma C.3 in \cite{dobriban_wager_supplement}, which is adapted from Lemma B.26 in \cite{bai2010spectral}. This result asserts that, almost surely,
	\begin{equation*}
		\delta^{\prime} A\delta - \left(\frac{\alpha^2}{k}\right)\tr{(A)} \to_{a.s.} 0 \, .
	\end{equation*}
	Now examine
	\begin{align*}
		\frac{\alpha^2}{k}\tr{(A)}  &= \frac{\alpha^2}{k} \tr\left(P_\lambda \left(\frac{Z Z'}{n} + \frac{c_n^2}{n}I_n\right)\right) \\
		&=\frac{\alpha^2}{k} \tr\left(Z'P_\lambda Z/n \right) + \frac{\sigma_\nu^2}{n} \tr\left(P_\lambda\right) \\
		&= \underbrace{\frac{\alpha^2}{k} \tr\left(Z'Z/n \right)}_{a} - \underbrace{\frac{\alpha^2\lambda}{k} \tr\big(Z' (ZZ'/n+\lambda I_n)^{-1}  Z/n \big)}_{b} + \underbrace{\frac{\sigma_\nu^2}{n} \tr\left(P_\lambda\right)}_{c} \, .
	\end{align*}
	where the first and second equalities follow from cyclic rotations within the trace operator and the third equality uses Lemma \ref{Lemma_ridgerep}. In the following, we consider $a,b$, and $c$ separately. For convenience, we use the Stieltjes transform $m\lam$ for all terms related to the identification strength $\alpha^2$ and the companion Stieltjes transform $v\lam$ for all terms related to $\sigma^2_\nu$ or $\sigma_{\varepsilon\nu}$. It is known that 
	\begin{equation*}
		a = \frac{\alpha^2}{k} \tr\left(Z'Z/n \right) \to_{a.s.} \alpha^2 \, .
	\end{equation*}
	
	Moreover,
	\begin{align*}
		b &= \frac{\alpha^2\lambda}{k} \tr\big(Z' (ZZ'/n+\lambda I_n)^{-1}   Z/n \big) \\
		&= \frac{\alpha^2\lambda}{k} \tr\big((Z'Z/n) (Z'Z/n+\lambda I_k)^{-1}  \big) \\
		&= \frac{\alpha^2\lambda}{k} \tr\big(I_k - \lambda (Z'Z/n+\lambda I_k)^{-1}  \big)\to_{a.s.} \alpha^2\lambda \left(1-\lambda m(-\lambda) \right) \, ,
	\end{align*}
	where the second equality uses the kernel trick, and the third equality follows from the relation $Z'Z/n=(Z'Z/n+\lambda I_k)-\lambda I_k\, $. 
	
	Finally, by similar calculations, we have
	\begin{align*}
		c &= \frac{\sigma_\nu^2}{n} \tr\big( Z(Z'Z/n+\lambda I_k)^{-1}Z'/n \big) \\
		&= \frac{\sigma_\nu^2}{n} \tr\big( I_n - \lambda (ZZ'/n+\lambda I_n)^{-1} \big) \\
		& \to_{a.s.} \sigma_\nu^2  (1-\lambda v\lam) \, .
	\end{align*}
	Hence, we have shown that, almost surely,
	\begin{equation*}
		\frac{x'P_\lambda x}{n} \to_{a.s.} \alpha^2 \big(1-\lambda(1-\lambda m\lam)\big) + \sigma_\nu^2 (1-\lambda v\lam) \, .
	\end{equation*}
	
	For the numerator, we get
	\begin{align*}
		\frac{x' P_{\lambda} \varepsilon}{n} &= (Z\pi+\nu)'P_\lambda \varepsilon/n \\
		&= \pi'Z'P_\lambda \varepsilon/n + \frac{1}{n}\tr\big(\nu'( I_n - \lambda (ZZ'/n+\lambda I_n)^{-1} )\varepsilon\big) \\
		&\to_{a.s.} \sigma_{\varepsilon\nu} \big( 1 - \lambda v \lam \big) \, .
	\end{align*}
	where the second equality uses Lemma \ref{Lemma_ridgerep}. Hence, by the continuous mapping theorem, we have
	\begin{align*}
		\widetilde\beta_{\lambda} - \beta \to_{a.s.} \frac{ \sigma_{\varepsilon\nu} \big( 1 - \lambda v \lam \big) }{ \alpha^2 \big(1-\lambda(1-\lambda m\lam)\big) + \sigma_\nu^2 (1-\lambda v\lam) } \, .
	\end{align*}
\end{proof}

\newpage
\section{Proof of Theorem \ref{theorem_ba_ridge} \label{appendix_proof_ba_ridge}}

\begin{proof}	
	By definition of the linear model (\ref{model}), we can write
	\begin{align*}
		\widehat\beta_{\lambda} = \beta + \frac{x'S_{\lambda}\varepsilon}{x'S_{\lambda}x } \, .
	\end{align*}
	Recall the definition of $S_{\lambda}=P_{\lambda} - (1-\lambda \widehat{v}\lam) I_{n} \, $. From Theorem \ref{theorem_ridge} we know for the first term of the denominator that
	\begin{equation*}
		\frac{x'P_\lambda x}{n} \to_{a.s.} \alpha^2 \big(1-\lambda p\lam\big) + \sigma_\nu^2 (1-\lambda v\lam) \, .
	\end{equation*}
	where we use the definition $p\lam=1-\lambda m\lam$ for clarity. 
	
	Using similar calculations as in Theorem \ref{theorem_ridge}, we get the second term
	\begin{align*}
		\frac{x'x}{n} &= (\pi, \nu)^{\prime} \left(\frac{1}{n} \begin{bmatrix} Z \\ I_n \end{bmatrix}^{\prime} \begin{bmatrix} Z \\ I_n \end{bmatrix}\right)(\pi, \nu) \\
		&= \underbrace{(\pi, \nu/c_n)^{\prime}}_{\delta^{\prime}} \underbrace{\left(\frac{1}{n} \begin{bmatrix} Z \\ c_n I_n \end{bmatrix}^{\prime} \begin{bmatrix} Z \\ c_n I_n \end{bmatrix}\right)}_{A} \underbrace{(\pi, \nu/c_n)}_{\delta}.
	\end{align*}
	Again, because $\delta$ and $A$ are independent and $\delta$ has independent entries with mean zero and common variance $\alpha^2/k$, we can use the almost sure convergence of quadratic forms, from Lemma C.3 in \cite{dobriban_wager_supplement}, which is adapted from Lemma B.26 in \cite{bai2010spectral}.
	Therefore, we have
	\begin{align*}
		\frac{\alpha^2}{k}\tr{(A)}  &= \frac{\alpha^2}{k} \tr\left(I_n \left(\frac{Z Z'}{n} + \frac{c_n^2}{n}I_n\right)\right) \\
		&= \underbrace{\frac{\alpha^2}{k} \tr\big(Z'Z/n)}_{a} + \underbrace{\frac{\sigma_\nu^2}{n} \tr\left(I_n \right)}_{b} \, ,
	\end{align*}
	where the first and second equality follow from cyclic rotations within the trace operator. 
	We have
	\begin{align*}
		a &= \frac{\alpha^2}{k} \tr\big(Z'Z/n \big) \to_{a.s.} \alpha^2 \, ,
	\end{align*}
	and 
	\begin{align*}
		b &= \frac{\sigma_\nu^2}{n} \tr\left(I_n \right) \to_{a.s.} \sigma_\nu^2 \, .
	\end{align*}	
	Therefore, the denominator of $\widehat\beta$ converges almost surely to
	\begin{align*}
		\frac{x'S_\lambda x}{n}  &\to_{a.s.}  \alpha^2 \big( 1-\lambda p\lam \big) + \sigma_\nu^2 \big( 1-\lambda v\lam \big) - (1-\lambda v\lam \big)[ \alpha^2 + \sigma_\nu^2] \, \\
		&= \alpha^2 \big( 1-\gamma p\lam - \lambda p\lam \big).
	\end{align*}
For the numerator we have 
\begin{align*}
		\frac{x' S_{\lambda} \varepsilon}{n} &\to_{a.s.} \sigma_{\varepsilon\nu} \big( 1 - \lambda v \lam \big) - \big( 1 - \lambda v \lam \big)\sigma_{\varepsilon\nu} =0 \, .
	\end{align*}
	Hence, by the continuous mapping theorem, we finally have
	\begin{align*}
		\widehat\beta_{\lambda} - \beta \to_{a.s.} 0 \, .
	\end{align*}
\end{proof}

\newpage
\section{Proof of Corollaries \label{appendix_proof_corollaries}}

To simplify the illustration, we use the following notation in the proofs of the Corollaries: $m'\lam=\E\left[ \frac{1}{(Y+\lambda)^2} \right]$ and $-p'\lam=m\lam-\lambda m'\lam \, $. 

\subsection*{Proof of Corollary \ref{corollary_amplifier}}

\begin{proof}	
The first claim, $\lim_{\lambda\downarrow 0} a\lam =1$, follows immediately from Lemma \ref{Lemma_vlam0}. For the second claim, note that
\begin{align*}
	a\lam = \gamma\frac{1-\lambda p\lam}{1-\lambda v\lam}=\frac{1-\lambda p\lam}{p\lam}=\frac{1}{p\lam}-\lambda 
\end{align*}
which follows from Lemma \ref{Lemma_companion}. For the first derivative, we see
\begin{align*}
	a'\lam=-\frac{p'\lam}{p^2\lam}-1 \, ,
\end{align*}
which, for $v\lam\neq 1$ or equivalently $m\lam\neq 1$,  is positive if
\begin{align*}
	m\lam - \lambda m'\lam - 1 + 2\lambda m\lam - \lambda^2m^2\lam \geq 0 \, ,
\end{align*}
or, equivalently,
\begin{align*}
	\underbrace{m\lam - \lambda m'\lam - 1 + 2\lambda m\lam - \lambda^2 m'\lam}_{A} + \underbrace{\lambda^2 m'\lam - \lambda^2m^2\lam}_{B} \geq 0 \, .
\end{align*}
Note that 
\begin{align*}
	B &= \E\left[ \left(\frac{\lambda}{Y+\lambda}\right)^2 \right] - \E\left[ \frac{\lambda}{Y+\lambda} \right]^2 \geq 0
\end{align*} 
is a variance term and hence positive, and 
\begin{align*}
	A &= \E\left[ \frac{1}{Y+\lambda} \right]-\E\left[ \frac{\lambda+\lambda^2}{(Y+\lambda)^2}\right]+\E\left[ \frac{2\lambda}{Y+\lambda} \right] - 1  \\
	&= \E\left[ \frac{(1+2\lambda)(Y+\lambda)-\lambda-\lambda^2-(Y^2+2\lambda Y + \lambda^2)}{(Y+\lambda)^2}\right] \\
	&= \E\left[ \frac{Y-Y^2}{(Y+\lambda)^2}\right] .
\end{align*}
Moreover, using the dominated convergence theorem, we get
\begin{align*}
	\lim_{\lambda\downarrow 0} \E\left[ \frac{Y-Y^2}{(Y+\lambda)^2}\right] =E[Y^{-1}]-1 \geq 0 \, ,
\end{align*}
which proves the claim.
\end{proof}

\subsection*{Proof of Corollary \ref{corollary_badj_derivative}}

\begin{proof}	
The first claim, $\lim_{\lambda\downarrow 0} f(\lambda)=1-\gamma$, follows immediately from Lemma \ref{Lemma_vlam0}. For the second claim, note that
\begin{align*}
	f'\lam &= -\big(\lambda+\gamma\big)p'\lam - p\lam \, 
\end{align*}
Therefore, we get
\begin{align*}
	f'\lam = \E\left[ \frac{(\lambda+\gamma)Y}{(Y+\lambda)^2} \right]-\E\left[ \frac{Y}{Y+\lambda} \right] =\gamma \E\left[ \frac{Y}{(Y+\lambda)^2} \right] - \E\left[ \frac{Y^2}{(Y+\lambda)^2} \right] \, ,
\end{align*}
Moreover, using the dominated convergence theorem, we get
\begin{align*}
	\lim_{\lambda\downarrow 0} \left( \gamma \E\left[ \frac{Y}{(Y+\lambda)^2} \right] - \E\left[ \frac{Y^2}{(Y+\lambda)^2} \right]\right) = \gamma \E\left[ Y^{-1} \right] - 1\geq 0 \, ,
\end{align*}
which proves the claim. 
\end{proof}

\newpage
\section{Asymptotic Variance of $\widehat\beta_\lambda$ \label{appendix_variance}}

Recall the definition of $S_{\lambda}=P_{\lambda} - (1-\lambda \widehat{v}\lam) I_{n} \, $ and, therefore, $S_\lambda S_\lambda = P_\lambda P_\lambda - 2(1-\lambda \widehat{v}\lam)P_\lambda + (1-\lambda\widehat{v}\lam)^2 I_n\, $. Following \cite{hastie2022supplement}, let $x = Z\pi + \nu$ and $c_n = \sqrt{k}(\sigma_\nu/\alpha)$ and observe that
	\begin{align*}
		\frac{x'P_{\lambda}P_{\lambda}x}{n} &= (\pi, \nu)^{\prime} \left(\frac{1}{n} \begin{bmatrix} Z \\ I_n \end{bmatrix}^{\prime} P_\lambda P_\lambda \begin{bmatrix} Z \\ I_n \end{bmatrix}\right)(\pi, \nu) \\
		&= \underbrace{(\pi, \nu/c_n)^{\prime}}_{\delta^{\prime}} \underbrace{\left(\frac{1}{n} \begin{bmatrix} Z \\ c_nI_n \end{bmatrix}^{\prime} P_\lambda P_\lambda \begin{bmatrix} Z \\ c_nI_n \end{bmatrix}\right)}_{A} \underbrace{(\pi, \nu/c_n)}_{\delta}.
	\end{align*}
	Because $\delta$ and $A$ are independent and $\delta$ has independent entries with mean zero and common variance $\alpha^2/k$, we can use the almost sure convergence of quadratic forms, from Lemma C.3 in \cite{dobriban_wager_supplement}, which is adapted from Lemma B.26 in \cite{bai2010spectral}. This result asserts that, almost surely,
	\begin{equation*}
		\delta^{\prime} A\delta - \left(\frac{\alpha^2}{k}\right)\tr{(A)} \to_{a.s.} 0 \, .
	\end{equation*}
	Now examine
	\begin{align*}
		\frac{\alpha^2}{k}\tr{(A)}  &= \frac{\alpha^2}{k} \tr\left(P_\lambda P_\lambda \left(\frac{Z Z'}{n} + \frac{c_n^2}{n}I_n\right)\right) \\
		&=\frac{\alpha^2}{k} \tr\left(Z'P_\lambda P_\lambda Z/n \right) + \frac{\sigma_\nu^2}{n} \tr\left(P_\lambda P_\lambda\right) \\
		&=\frac{\alpha^2}{k} \tr\left(Z'(I_n -2 \lambda(ZZ'/n + \lambda I_n)^{-1} +\lambda^2(ZZ'/n + \lambda I_n)^{-2})Z/n \right) + \frac{\sigma_\nu^2}{n} \tr\left(P_\lambda P_\lambda\right) \\
		&= \underbrace{\frac{\alpha^2}{k} \tr\left(Z'Z/n \right)}_{a} - \underbrace{2 \frac{\alpha^2\lambda}{k} \tr\big(Z' (ZZ'/n+\lambda I_n)^{-1}  Z/n \big)}_{b}  \\ &+  \underbrace{\frac{\alpha^2\lambda^2}{k} \tr\big(Z' (ZZ'/n+\lambda I_n)^{-2}  Z/n \big)}_{c} + \underbrace{\frac{\sigma_\nu^2}{n} \tr\left(P_\lambda P_\lambda\right)}_{d} \, .
	\end{align*}
	where the first and second equality follow from cyclic rotations within the trace operator and the third equality uses Lemma \ref{Lemma_ridgerep}. In the following, we consider $a,b,c$, and $d$ separately. For convenience, we use the Stieltjes transform $m\lam$ and $p\lam=1-\lambda m\lam$ for all terms related to the identification strength $\alpha^2$ and the companion Stieltjes transform $v\lam$ for all terms related to $\sigma^2_\nu$ or $\sigma_{\varepsilon\nu}$. It is known that 
	\begin{equation*}
		a = \frac{\alpha^2}{k} \tr\left(Z'Z/n \right) \to_{a.s.} \alpha^2 \, .
	\end{equation*}
	
	Moreover,
	\begin{align*}
		b &= 2\frac{\alpha^2\lambda}{k} \tr\big(Z' (ZZ'/n+\lambda I_n)^{-1}   Z/n \big) \\
		&= 2\frac{\alpha^2\lambda}{k} \tr\big((Z'Z/n) (Z'Z/n+\lambda I_k)^{-1}  \big) \\
		&= 2\frac{\alpha^2\lambda}{k} \tr\big(I_k - \lambda (Z'Z/n+\lambda I_k)^{-1}  \big)\to_{a.s.} 2\alpha^2\lambda p\lam \, ,
	\end{align*}
	where the second equality uses the kernel trick, and the third equality follows from the relation $Z'Z/n=(Z'Z/n+\lambda I_k)-\lambda I_k\, $. Likewise, 
	\begin{align*}
		c &= \frac{\alpha^2\lambda^2}{k} \tr\big(Z' (ZZ'/n+\lambda I_n)^{-2}   Z/n \big) \\
		&= \frac{\alpha^2\lambda^2}{k} \tr\big((Z'Z/n) (Z'Z/n+\lambda I_k)^{-2}  \big) \\
		&= \frac{\alpha^2\lambda^2}{k} \tr\big((Z'Z/n + \lambda I_k)^{-1} - \lambda (Z'Z/n+\lambda I_k)^{-2} \big)\\ &\to_{a.s.} - \alpha^2\lambda^2 p'\lam \, ,
	\end{align*}
	where $p'\lam=\lambda m'(-\lambda)-m(-\lambda) \, $.
	Finally, by similar calculations, we have
	\begin{align*}
		d &= \frac{\sigma_\nu^2}{n} \tr\big( (Z(Z'Z/n+\lambda I_k)^{-1}Z'/n)(Z(Z'Z/n+\lambda I_k)^{-1}Z'/n) \big) \\
				 &= \frac{\sigma_\nu^2}{n} \tr\big(I_n -2 \lambda(ZZ'/n + \lambda I_n)^{-1}+ \lambda^2 (ZZ'/n + \lambda I_n)^{-2}  \big) \\
		& \to_{a.s.} \sigma_\nu^2  (1-2 \lambda v\lam + \lambda^2 v'\lam) \, .
	\end{align*}
	Hence, we have shown that, almost surely,
	\begin{equation*}
		\frac{x'P_\lambda P_\lambda x}{n} \to_{a.s.} \alpha^2 \big(1-2 \lambda p\lam - \lambda^2p'\lam \big) + \sigma_\nu^2 (1-2 \lambda v\lam + \lambda^2 v'\lam) \, .
	\end{equation*}
We suppress the $\lam$ in the following to simplify the notation. Moreover,
\begin{align*}
	\frac{x'S_\lambda S_\lambda x}{n} &= \frac{x'P_\lambda P_\lambda x}{n} -2 (1-\lambda \widehat{v}) \, \frac{x'P_\lambda x}{n} + (1-\lambda \widehat{v})^2 \, \frac{x' x}{n} \\
	&\to_{a.s.} \alpha^2 \big(1-2 \lambda p - \lambda^2 p' \big) + \sigma_\nu^2 (1-2 \lambda v + \lambda^2 v') \\ & - 2(1-\lambda v)  \big[\alpha^2 (1- \lambda p ) + \sigma_\nu^2 (1- \lambda v)  \big] + (1- \lambda v)^2\big[\alpha^2 + \sigma_\nu^2 \big] \\
	&=  \alpha^2 \big[ 1 - 2 \lambda p - \lambda^2 p' - 2 \gamma p(1-\lambda p) +\gamma^2 p^2 \big] + \sigma_\nu^2 \big[ 1-2\lambda v+\lambda^2 v' - (1-\lambda v)^2 \big] \\	
	&= \alpha^2 \big[ 1-2(\lambda+\gamma)p + 2\gamma\lambda p^2 + \gamma^2 p^2 + \lambda^2 p^2 - \lambda^2 p^2 - \lambda^2 p' \big] + \sigma_\nu^2 \big[ \lambda^2 v' - \lambda^2 v^2 \big] \\	
	&=  \alpha^2 \big[ f^2 - \lambda^2 (p^2 + p') \big] + \sigma_\nu^2 \big[ \lambda^2 v' - \lambda^2 v^2 \big] \, ,
\end{align*}
where the first equality uses $1-\lambda v = \gamma p$ and the third equality uses $f^2=(1-(\lambda+\gamma)p)^2 \, $.
First, consider the numerator of $\widehat\V(\widehat\beta_{\lambda})$, 
\begin{align*}
	\frac{x'S_\lambda S_\lambda\widetilde{x}}{n} &= \frac{x'P_\lambda P_\lambda \widetilde{x}}{n} - 2(1-\lambda \widehat{v}) \, \frac{x'P_\lambda \widetilde{x}}{n} + (1-\lambda \widehat{v})^2 \, \frac{x' \widetilde{x}}{n} \\
	&= \frac{x'S_\lambda S_\lambda x}{n} + \frac{1}{n}\big( x'P_\lambda P_\lambda \widehat\varepsilon - 2(1-\lambda\widehat{v}) x'P_\lambda \widehat\varepsilon + (1-\lambda\widehat{v})^2 x'\widehat\varepsilon \, \big)\frac{\widehat\varepsilon'x}{\widehat\varepsilon'\widehat\varepsilon}  \\
	&= \frac{x'S_\lambda S_\lambda x}{n} + \frac{1}{n}\big( x'P_\lambda P_\lambda \widehat\varepsilon - (1-\lambda\widehat{v})^2 x'\widehat\varepsilon \, \big)\frac{\widehat\varepsilon'x}{\widehat\varepsilon'\widehat\varepsilon}  \\
	&= \frac{x'S_\lambda S_\lambda x}{n} + \big(\lambda^2 \widehat{v}' - \lambda^2\widehat{v}^2 \big) \frac{x'\widehat\varepsilon}{n} \, \frac{\widehat\varepsilon'x}{\widehat\varepsilon'\widehat\varepsilon}  \\
	&\to_{a.s.} \alpha^2 \big[ f^2 - \lambda^2 (p^2+p') \big] + \left( \sigma_\nu^2 + \frac{\sigma_{\varepsilon\nu}^2}{\sigma^2_\varepsilon} \right) \big[ \lambda^2 v' - \lambda^2 v^2 \big] \, ,
\end{align*}
where we use $S_\lambda S_\lambda = P_\lambda P_\lambda - 2(1-\lambda \widehat{v})P_\lambda + (1-\lambda\widehat{v})^2 I_n$ in the second equality, $x'P_\lambda\widehat\varepsilon = (1-\lambda\widehat{v}) x'\widehat\varepsilon$ in the third equality, and $x'P_\lambda P_\lambda \widehat\varepsilon = (1-2\lambda\widehat{v}+\lambda^2 \widehat{v}') x'\widehat\varepsilon$ in the fourth equality. From Theorem \ref{theorem_ba_ridge}, we have $x'S_\lambda x/n \to_{a.s.} \alpha^2 f \, $. Therefore, we finally get 
\begin{align*}
	\widehat\V(\widehat\beta_\lambda) \to_{a.s.} \frac{1}{n} \, \left( \frac{\sigma_\varepsilon^2}{\alpha^2} \bigg[ 1 - \frac{ \lambda^2 (p^2 + p')}{f^2} \bigg] + \frac{\left( \sigma^2_\varepsilon \sigma_\nu^2 + \sigma_{\varepsilon\nu}^2 \right)}{\alpha^4}\frac{ \big[ \lambda^2 v' - \lambda^2 v^2 \big]}{f^2} \right) \, .
\end{align*}

We can further rewrite the equation by factoring out $ \sigma^2_\varepsilon \sigma_\nu^2 / \alpha^4$ to obtain Equation \ref{eq:var} from the main part of the paper:

\begin{align*}
	\widehat\V(\widehat\beta_\lambda) \to_{a.s.} \frac{1}{n}\frac{\sigma^2_\varepsilon \sigma_\nu^2}{\alpha^4} \, \left( \gamma F - \frac{ \gamma F \lambda^2 \big[ p' + p^2 \big]}{f^2} + \frac{ \lambda^2 \big[ v' - v^2 \big]}{f^2} \left( 1 + \rho_{\varepsilon\nu}^2 \right) \right) \, .
\end{align*}

For $\lambda=0$ and $\gamma<1$, we get the Bekker asymptotic variance as a special case because then $\lim_{\lambda\downarrow 0} p\lam = 1$, $\lim_{\lambda\downarrow 0} p'\lam = \E[Y^{-1}]$, $\lim_{\lambda\downarrow 0} f\lam = 1-\gamma$ and 
\begin{align*}
	\lim_{\lambda\downarrow 0} \lambda^2 v'\lam - \lambda^2 v^2\lam &= \lim_{\lambda\downarrow 0} \lambda^2 v^2\lam\big(\widehat{v}'\lam/\widehat{v}^2\lam-1\big) \\
	&=(1-\gamma)^2\left( \frac{1}{1-\gamma}-1 \right)=\gamma(1-\gamma) \, .
\end{align*}
Therefore,
\begin{align*}
	\V(\widehat\beta_0)  \to_{a.s.} \frac{1}{n}\frac{\sigma^2_\varepsilon \sigma_\nu^2}{\alpha^4} \, \left( \gamma F + \frac{\gamma}{1-\gamma} \left( 1 + \rho_{\varepsilon\nu}^2 \right) \right) \, .
\end{align*}


\newpage

\section{Asymptotic Variance Comparison \label{appendix_proof_variance_standard_ba2SLS}}

Below, we prove that the asymptotic variance of the bias-adjusted 2SLS-Ridge estimator is smaller than the asymptotic variance of the standard bias-adjusted 2SLS estimator for some values of $\lambda$. The proof proceeds in two steps: First, we compute the derivative of the asymptotic variance of the bias-adjusted 2SLS-Ridge estimator with respect to $\lambda$. Second, we show that this derivative is negative in the  $\lambda \downarrow 0$ limit.
\bigskip 

\begin{proof}
\begin{align*}
	&\frac{\partial}{\partial \lambda} \left\{ \frac{1}{n} \, \left( \frac{\sigma_\varepsilon^2}{\alpha^2} \bigg[ 1 - \frac{ \lambda^2 (p^2 + p')}{f^2} \bigg] + \frac{\left( \sigma^2_\varepsilon \sigma_\nu^2 + \sigma_{\varepsilon\nu}^2 \right)}{\alpha^4}\frac{ \big[ \lambda^2 v' - \lambda^2 v^2 \big]}{f^2} \right) \right\}
\end{align*}

Considering the summands separately, we get for the first summand (the factor $\frac{\sigma_\varepsilon^2}{n \alpha^2}$ is omitted for ease of notation):

\begin{align*}
	&\frac{\partial}{\partial \lambda} \bigg[ 1 - \frac{ \lambda^2 (p^2 + p')}{f^2} \bigg] \\
	&= \frac{\lambda^2(p^2 + p') 2ff' - \left[2\lambda(p^2+p') + \lambda^2(2pp'+p'')\right] f^2}{f^4} \\
	&= \frac{- 2 \lambda^2(p^2 + p')  (\lambda + \gamma) p'  - 2 \lambda^2 (p^2+p') p -  2\lambda (p^2+p') }{f^3} \\
	&+ \frac{- \lambda^2 (2pp'+p'')  +  2 \lambda (\lambda + \gamma) p (p^2+p') + \lambda^2 (\lambda + \gamma) p (2pp'+p'')}{f^3} \\
	&= \frac{- 2 \lambda^2  (\lambda + \gamma) p^2 p' - 2 \lambda^2  (\lambda + \gamma) {p'}^2  -  2 \lambda^2 p^3 - 2 \lambda^2 pp' - 2 \lambda p^2  - 2 \lambda p'  }{f^3} \\
	&+ \frac{-2 \lambda^2 pp' - \lambda^2 p'' + 2 \lambda (\lambda+\gamma) p^3 + 2 \lambda (\lambda + \gamma) pp' + 2 \lambda^2 (\lambda + \gamma) p^2 p' + \lambda^2 (\lambda + \gamma) pp''}{f^3} \\
&= \frac{-2\lambda p' + (-2\lambda^2 + 2 \lambda \gamma) pp'- \lambda^2 p''+ \lambda^2(\lambda + \gamma) pp'' - 2\lambda^2 (\lambda + \gamma) {p'}^2- 2 \lambda p^2 + 2 \gamma \lambda p^3 }{f^3}		
\end{align*}

where we used $f = 1 - \lambda p - \gamma p = \lambda v - \lambda p =  \lambda (v-p)$ and $f' = - (\lambda + \gamma) p' -p $. \\

For the second summand, we get (the factor $\frac{\sigma_\varepsilon^2 \sigma_\nu^2 + \sigma_{\varepsilon\nu}^2}{n \alpha^4}$ is omitted for ease of notation):

\begin{align*}
	&\frac{\partial}{\partial \lambda} \left( \frac{ \lambda^2 v' - \lambda^2 v^2}{f^2} \right) \\
	&=  \frac{f^2 (2 \lambda v' - \lambda^2 v'' - 2 \lambda v^2  + \lambda^2 2 v v') - (\lambda^2 v' - \lambda^2 v^2 ) 2ff'}{f^4} \\
	&=  \frac{2 \lambda^2 v' (v-p) - \lambda^3 v'' (v-p) - 2 \lambda^2 v^2 (v-p) + \lambda^3 2 vv' (v-p)}{f^3} \\
		&- \frac{2 \lambda^2 v' (v-p) -  2 \lambda^2 v^2 (v-p) - 2 \lambda^3 v' (v' + p') + 2 \lambda^3 v^2 (v' + p')}{f^3} \\
				&= \frac{- \lambda^3 v'' v + \lambda^3 v'' p + 2 \lambda^3 v^2 v' - 2 \lambda^3 v v' p + 2 \lambda^3 {v'}^2 + 2 \lambda^3v'p' - 2\lambda^3 v^2 v' - 2 \lambda^3 v^2 p'}{f^3} \\
				&= \frac{-\lambda^3 v'' v + \lambda^3 v'' p  - 2 \lambda^3 v v' p + 2 \lambda^3 {v'}^2 + 2 \lambda^3v'p'  - 2 \lambda^3 v^2 p'}{f^3}  \\
\end{align*}
where we used $f = 1 - \lambda p - \gamma p = \lambda v - \lambda p =  \lambda (v-p)$ and $f' = (v-p) - \lambda (v' + p')$. \\

Now, we can investigate the behavior of both summands in the $\lambda \downarrow 0$ limit. The first summand vanishes, as there is at least one $\lambda$ in every summand and no diverging terms (see Lemma \ref{Lemma_vlam0}).

For the second summand, it is more complicated, as $\lambda v$ does not vanish. Some prior calculations are needed: 
 \begin{align*}
	& v = \frac{1}{\lambda} - \frac{\gamma}{\lambda} p \\
	& v' = \frac{1}{\lambda^2} - \frac{\gamma}{\lambda^2} p + \frac{\gamma}{\lambda} p' = \frac{1}{\lambda} v  + \frac{\gamma}{\lambda} p' \\
		& v'' = 2\frac{1}{\lambda^2} v  + 2 \frac{\gamma}{\lambda^2} p' - \frac{\gamma}{\lambda} p''  
\end{align*}

We know from Lemma \ref{Lemma_vlam0} that in the $\lambda \downarrow 0$ (and $\gamma < 1$) limit, the following holds:
  \begin{align*}
   & \lambda v = 1-\gamma, \  \lambda m = 0, \  p = 1, \ f = 1-\gamma \, .
\end{align*}

We now take, using the above calculations, the limit for all terms of the second summand separately:

\begin{align*}
   & \lim_{\lambda \downarrow 0, \gamma < 1} - \lambda^3 v'' v \\
   &= \lim_{\lambda \downarrow 0, \gamma < 1} - 2 \lambda v^2 - 2\gamma \lambda p' v + \gamma \lambda^2 p'' v \\
   &= \lim_{\lambda \to 0, \gamma < 1} - 2(\lambda v) v - 2 \gamma p' (\lambda v) + \gamma p'' (\lambda v) \lambda \\
   &= - 2 (1-\gamma) v + 2 \gamma (1-\gamma) m  \\
   \\
   & \lim_{\lambda \downarrow 0, \gamma < 1} \lambda^3 v'' p \\
   &= \lim_{\lambda \downarrow 0, \gamma < 1}  2 \lambda v p + 2 \gamma \lambda p p' - \gamma \lambda^2 p p'' \\
   &= 2 (1-\gamma) \\
   \\
   & \lim_{\lambda \downarrow 0, \gamma < 1} - 2 \lambda^3 v v' p \\
   &= \lim_{\lambda \downarrow 0, \gamma < 1} - 2 \lambda^2 v^2 p - 2 \lambda^2 \gamma p p' v \\
   &= - \lim_{\lambda \downarrow 0, \gamma < 1}  2 (\lambda v)^2 p - 2 \lambda (\lambda v) \gamma p p' \\
   &= - 2 (1-\gamma)^2 \\
   \\
   & \lim_{\lambda \downarrow 0, \gamma < 1}  2 \lambda^3 {v'}^2 \\
   &= \lim_{\lambda \downarrow 0, \gamma < 1}  2 \lambda^3 \left( - \frac{1}{\lambda} v - \frac{\gamma}{\lambda} p' \right)^2 \\
   &= \lim_{\lambda \downarrow 0, \gamma < 1}  2 \lambda^3 \left( \frac{1}{\lambda^2} v^2 + 2 \frac{\gamma}{\lambda^2} v p' + \frac{\gamma^2}{\lambda^2} {p'}^2 \right) \\
   &= \lim_{\lambda \downarrow 0, \gamma < 1}  2 ( \lambda v) v + 4 (\lambda v) \gamma p' + 2 \lambda \gamma^2 {p'}^2 \\
   &= 2 (1-\gamma) v - 4 (1-\gamma) \gamma m \\
\end{align*}

\begin{align*}
   & \lim_{\lambda \downarrow 0, \gamma < 1}  2 \lambda^3 v' p' \\
   &= \lim_{\lambda \downarrow 0, \gamma < 1} - 2 \lambda^2 v p' - 2 \lambda^2 \gamma {p'}^2 \\
   &= \lim_{\lambda \downarrow 0, \gamma < 1} - 2 \lambda (\lambda v) p' - 2 \lambda^2 \gamma {p'}^2 \\
   &= 0 \\
   \\
   & \lim_{\lambda \downarrow 0, \gamma < 1}  2 \lambda^3 v^2 p' \\
   &= \lim_{\lambda \downarrow 0, \gamma < 1}  2 \lambda (\lambda v)^2 p' \\
   &= 0
\end{align*}

In total we get: 

  \begin{align*}
   & \lim_{\lambda \downarrow 0, \gamma<1} \frac{-\lambda^3 v'' v + \lambda^3 v'' p  - 2 \lambda^3 v v' p + 2 \lambda^3 {v'}^2 + 2 \lambda^3v'p'  - 2 \lambda^3 v^2 p'}{f^3}    \\
   &=  (1-\gamma)^{-3} \left[ - 2 (1-\gamma) v + 2 \gamma (1-\gamma) m  + 2 (1-\gamma) - 2(1-\gamma)^2 +  2 (1-\gamma) v - 4  (1-\gamma) \gamma m \right] \\
   &= (1-\gamma)^{-3} \left[  - 2 \gamma (1-\gamma) m  - 2 \gamma (1-\gamma)   \right] 
\end{align*}

If this  term is negative, the bias-adjusted 2SLS-Ridge estimator has less variance than the standard bias-adjusted 2SLS estimator for some small, positive $\lambda$. We can show that this is indeed the case under the assumption of $E\left[Y^{-1}\right]  > 1 $:

  \begin{align*}
 &  - 2 \gamma (1-\gamma) m  - 2 \gamma (1-\gamma)  \\
   &= - 2 \gamma (1-\gamma) E\left[Y^{-1}\right]   - 2 \gamma (1-\gamma) \overset{!}{<} 0 \\
   &\iff   E\left[Y^{-1}\right] 
    \overset{!}{>} 1
\end{align*}

\begin{align*}
	&  \lim_{\lambda \downarrow 0, \gamma<1}  \big( - 2 \gamma (1-\gamma) m  - 2 \gamma (1-\gamma) \big) = \lim_{\lambda \downarrow 0, \gamma<1} \bigg( - 2 \gamma (1-\gamma) E\left[ \frac{1}{Y+\lambda} \right]   - 2 \gamma (1-\gamma) \bigg) \\
	& = - 2 \gamma (1-\gamma) E\left[Y^{-1}\right]   - 2 \gamma (1-\gamma) \overset{!}{<} 0 \\
	&\iff   E\left[Y^{-1}\right] 
	\overset{!}{>} 1 \, .
\end{align*}

\end{proof}

\clearpage
\section{Optimal Tuning Parameter}\label{appendix_proof_lambda_star}

We want to select a tuning-parameter $\lambda$ that minimizes the asymptotic variance of the bias-adjusted 2SLS-Ridge estimator. Therefore, our strategy is to compute the derivative of Equation \ref{eq:var}, set it to zero, and derive an expression for $\lambda$ corresponding to the optimal tuning-parameter.

The derivative of Equation \ref{eq:var} with respect to $\lambda$ is given by:

\begin{equation*}
    \begin{split}
          \frac{\partial \mathrm{Var}(\hat{\beta}_{\lambda})}{\partial \lambda} &= \frac{\sigma^2_\nu \sigma^2_{\epsilon} F }{\alpha^4 n} \frac{1}{f^3} \\
          &  \bigg\{ - \gamma \bigg( \left[ 2 \lambda (p'+p^2) + \lambda^2 (p'' + 2 p p') \right] f - 2 \lambda^2 (p'+p^2) f' \bigg) \\
          & \ \ \ + \bigg( \left[ 2 \lambda (v'-v^2) + \lambda^2(-v''+2vv')\right] f - 2 \lambda^2 (v'-v^2) f' \bigg) * \frac{1+\rho_{\epsilon \nu}^2}{F} \bigg\}
    \end{split}
\end{equation*}

We rewrite this equation in terms of v, v', v'', using:

\begin{equation}\label{formulas_p}
    \begin{split}
          p &= \frac{1}{\gamma} (1-\lambda v) \\
          p'&= -\frac{1}{\gamma} (v - \lambda v') \\
          p'' &= \frac{1}{\gamma} (2 v' - \lambda v'') \\
          f &= \lambda \left(1 + \frac{\lambda}{\gamma} \right)  v - \frac{\lambda}{\gamma} \\
          f' &= \left(1 + 2\frac{\lambda}{\gamma} \right) v - \left(1 + \frac{\lambda}{\gamma} \right) v' - \frac{1}{\gamma}
    \end{split}
\end{equation}

\clearpage

Plugging in Equations \ref{formulas_p} yields:

\begin{equation*}
    \begin{split}
          \frac{\partial \mathrm{Var}(\hat{\beta}_{\lambda})}{\partial \lambda} &= \frac{\sigma^2_\nu \sigma^2_{\epsilon} F }{\alpha^4 n} \frac{1}{f^3} * \\
           &\ \bigg\{ \bigg[ 4 \frac{\lambda^3}{\gamma} v v' - 2 \frac{\lambda^3}{\gamma} v^3 - \frac{\lambda^3}{\gamma} v'' + \lambda^3 \left( 1+ \frac{\lambda}{\gamma}\right) v v'' - 2 \lambda^3 \left(1+\frac{\lambda}{\gamma} \right) v'^2 \bigg] * \lambda \\
          & - \bigg[ 4 \frac{\lambda^3}{\gamma} vv' - 2 \frac{\lambda^3}{\gamma} v^3 - \frac{\lambda^3}{\gamma} v'' + \lambda^3 \left(1 + \frac{\lambda}{\gamma}\right) vv'' - 2 \lambda^3 \left(1 + \frac{\lambda}{\gamma}\right) v'^2 \bigg] * \frac{1+\rho_{\epsilon \nu}^2}{F} \bigg\}
    \end{split}
\end{equation*}

We can now factor out $ \left( \lambda - \frac{1+\rho_{\epsilon \nu}^2}{F} \right)$ to get:
\begin{equation*}
    \begin{split}
          \frac{\partial \mathrm{Var}(\hat{\beta}_{\lambda})}{\partial \lambda} &= \frac{\sigma^2_\nu \sigma^2_{\epsilon} F }{\alpha^4 n} \bigg( \lambda - \frac{1+\rho_{\epsilon \nu}^2}{F} \bigg) * \\ 
          & \bigg\{ \frac{4 \frac{\lambda^3}{\gamma} vv' - 2 \frac{\lambda^3}{\gamma} v^3 - \frac{\lambda^3}{\gamma} v'' + \lambda^3 \left(1 + \frac{\lambda}{\gamma}\right) vv'' - 2 \lambda^3 \left(1 + \frac{\lambda}{\gamma}\right) v'^2 }{f^3} \bigg\}
    \end{split}
\end{equation*}

This shows that $\frac{\partial \mathrm{Var}(\hat{\beta}_{\lambda})}{\partial \lambda} = 0$ will be at least true for:

\begin{equation*}
    \lambda = \frac{1+\rho_{\epsilon \nu}^2}{F}
\end{equation*}

We can rewrite the asymptotic variance by plugging in this optimal tuning parameter:

\begin{equation}\label{variance_star}
    \begin{split}
         \mathrm{Var}(\hat{\beta}_{\lambda}) &= \frac{\sigma^2_\nu \sigma^2_{\epsilon}}{\alpha^4 n} \left[ \gamma F -\frac{\lambda \gamma (p' + p^2)}{f^2} \left(1 + \rho_{\epsilon \nu}^2 \right)  + \frac{\lambda^2(v' - v^2)}{f^2} \left(1 + \rho_{\epsilon \nu}^2 \right) \right] \\
         &= \frac{\sigma^2_\nu \sigma^2_{\epsilon}}{\alpha^4 n} \left[ \gamma F +\bigg( \frac{-\gamma \lambda (p' + p^2) + \lambda^2(v' - v^2)}{f^2} \bigg) \left(1 + \rho_{\epsilon \nu}^2 \right) \right]
    \end{split}
\end{equation} 

We use the following expressions for $v$ and $v'$:
\begin{equation}\label{formulas_v}
    \begin{split}
          v &= \frac{1}{\lambda} - \frac{\gamma}{\lambda} p \\
          v'&= \frac{1}{\lambda^2} - \frac{\gamma}{\lambda^2} p + \frac{\gamma}{\lambda} p'
    \end{split}
\end{equation}

and plug them into the asymptotic variance formula to get:

\begin{equation}
    \begin{split}
        \frac{\alpha^4 n}{\sigma^2_\nu \sigma^2_{\epsilon}} \mathrm{Var}(\hat{\beta}_{\lambda}) &=\gamma F + \frac{-\gamma \lambda (p'+p^2) + \gamma \lambda p' + 1 - \gamma p - 1 + 2 \gamma p - \gamma^2 p^2}{f^2} \left(1 +\rho_{\epsilon \nu}^2 \right) \\
        &=\gamma F + \frac{-\gamma \lambda p' - \gamma \lambda p^2 + \gamma \lambda p' + \gamma p - \gamma^2p^2}{f^2} \left(1 +\rho_{\epsilon \nu}^2 \right)\\
        &=\gamma F + \frac{\gamma p - \gamma^2 p^2 - \gamma \lambda p^2}{f^2} \left(1 +\rho_{\epsilon \nu}^2 \right)\\
        &=\gamma F + \frac{\gamma p[ 1-(\lambda+\gamma)p]}{f^2}  \left(1 +\rho_{\epsilon \nu}^2 \right)\\
        &= \gamma F + \frac{\gamma p f}{f^2}  \left(1 +\rho_{\epsilon \nu}^2 \right) \\
        &= \gamma F + \frac{\gamma p}{f}  \left(1 +\rho_{\epsilon \nu}^2 \right) \\
        &= \gamma F + \gamma \frac{1 - \lambda m}{f}  \left(1 +\rho_{\epsilon \nu}^2 \right)
    \end{split}
\end{equation}

This yields the asymptotic variance formula of the bias-adjusted 2SLS-Ridge estimator using the optimal tuning-parameter $\lambda^*$:

\begin{equation} \label{eq:var_lstar}
    \begin{split}
    \mathrm{Var}(\hat{\beta}_{\lambda^*}) = \frac{\sigma^2_\nu \sigma^2_{\epsilon}}{\alpha^4 n} \left[\gamma F + \gamma \frac{1 - \lambda^* m(-\lambda^*)}{f(-\lambda^*)}  \left(1 +\rho_{\epsilon \nu}^2 \right) \right]
    \end{split}
\end{equation}

\newpage
\section{Additional Simulation Results \label{appendix_simulation}}

\begin{figure}[h!]
	\centering
	\includegraphics[width=0.50\textwidth]{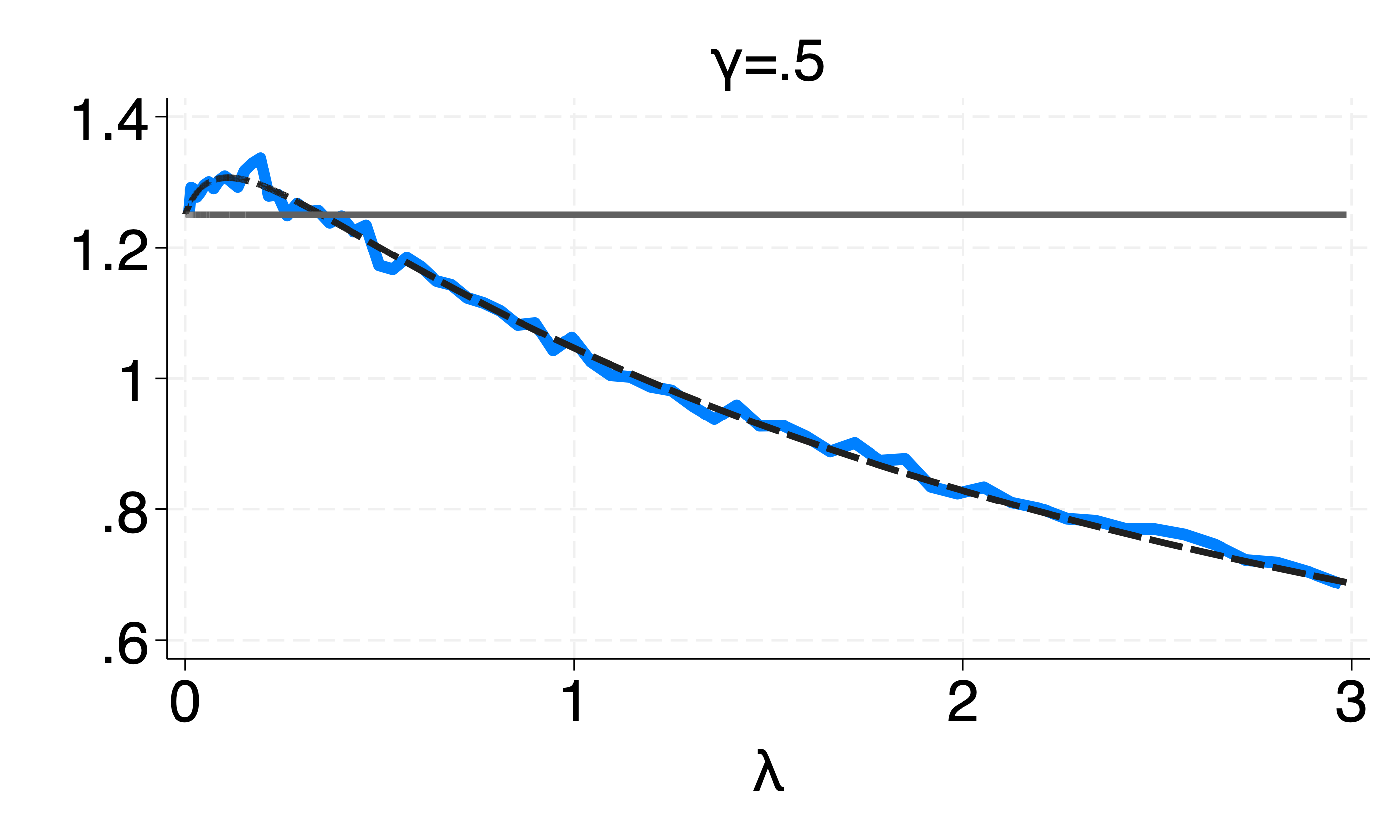}\hfill
	\includegraphics[width=0.50\textwidth]{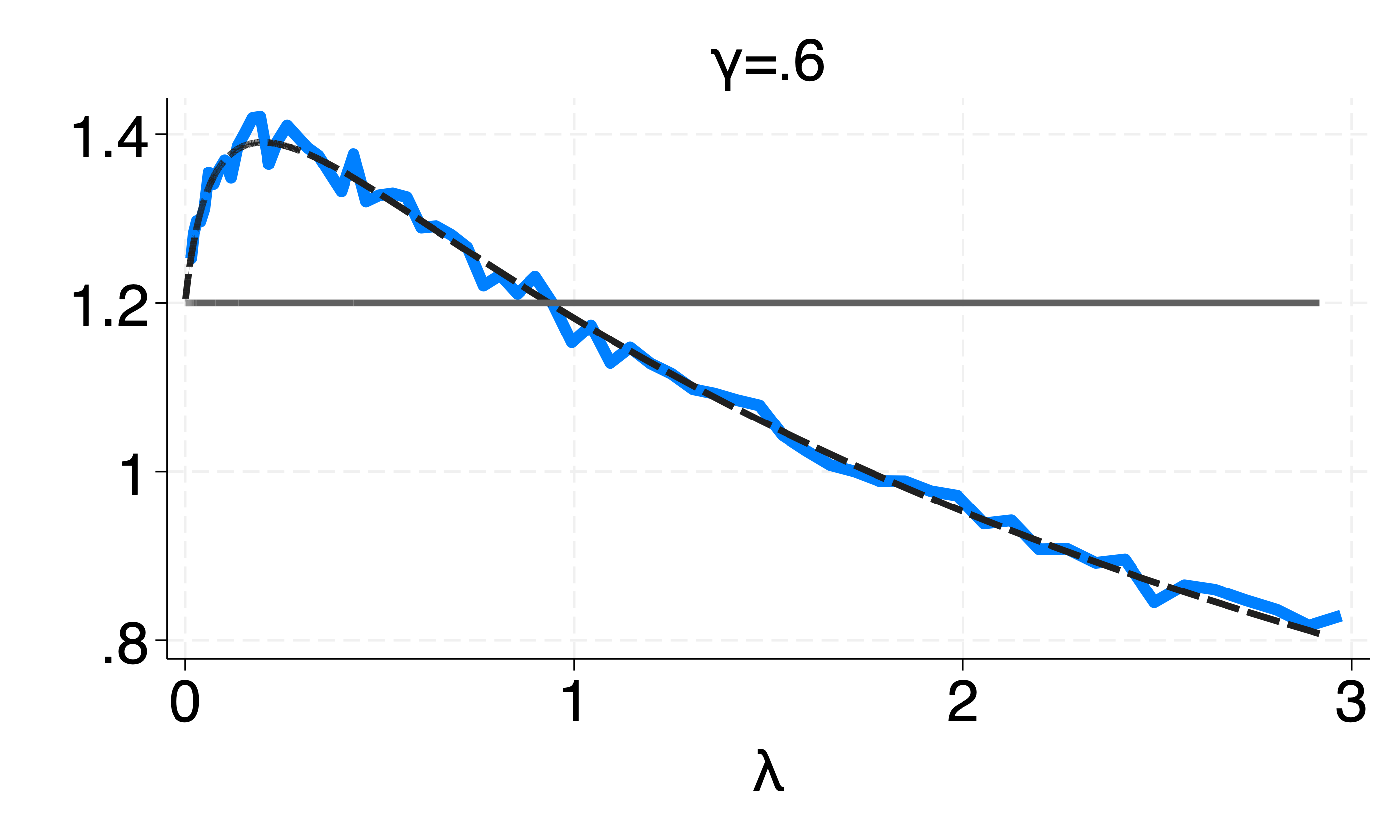}\hfill
	\includegraphics[width=0.50\textwidth]{./signal_ar1_75}\hfill
	\includegraphics[width=0.50\textwidth]{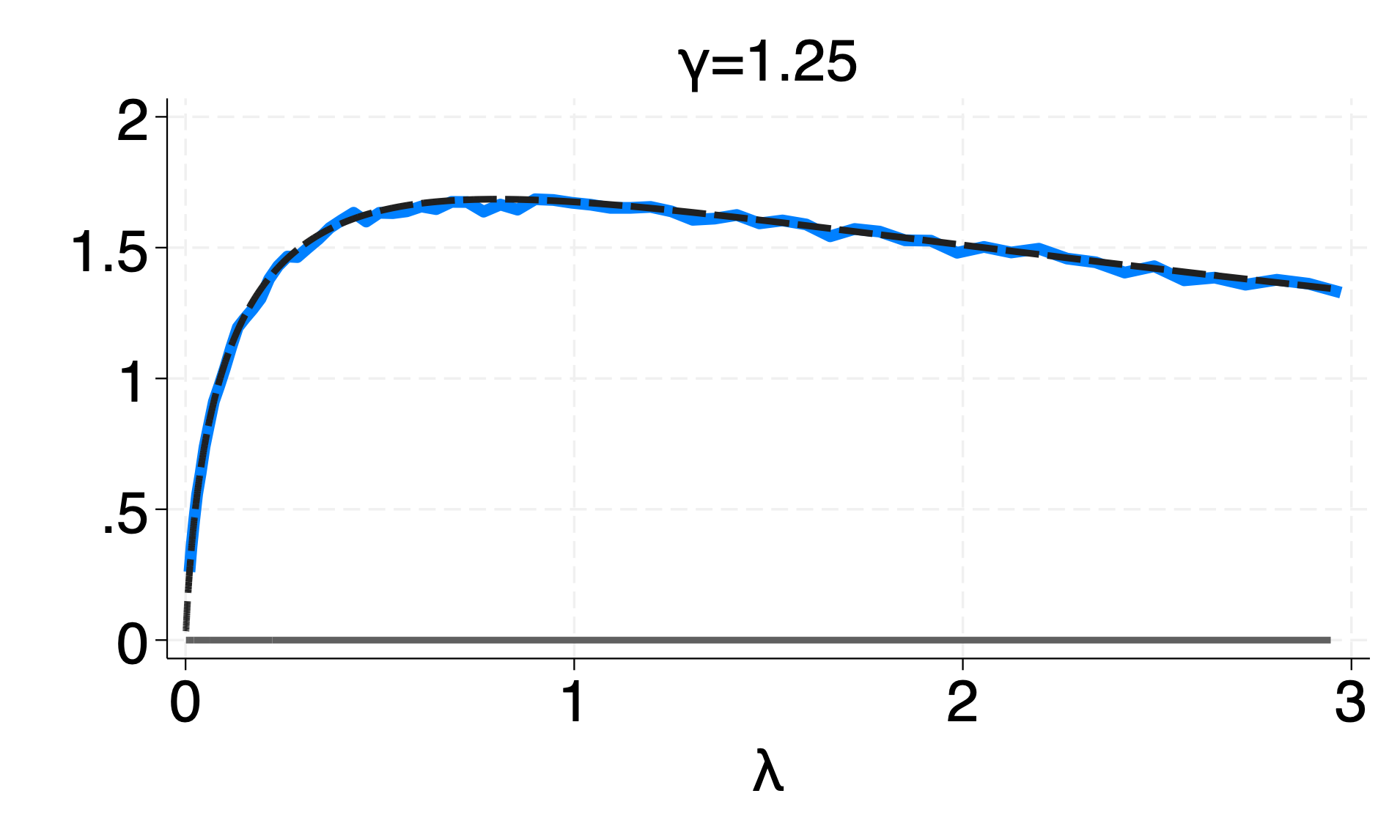}
	\caption{Signal approximation and simulation results for $x'S_\lambda x/n$ (average over 500 replications) and $\alpha^2 f\lam$ (dashed line); solid grey line indicates signal from standard bias-adjusted 2SLS, i.e. $\alpha^2(1-\gamma)$ if $\gamma<1$ and zero otherwise; AR-1 model $\Sigma_{ij}=0.5^{\vert i-j\vert}$. \label{figappendix_bias_signal}}
\end{figure}


\begin{figure}[h!]
	\centering
	\includegraphics[width=\textwidth]{./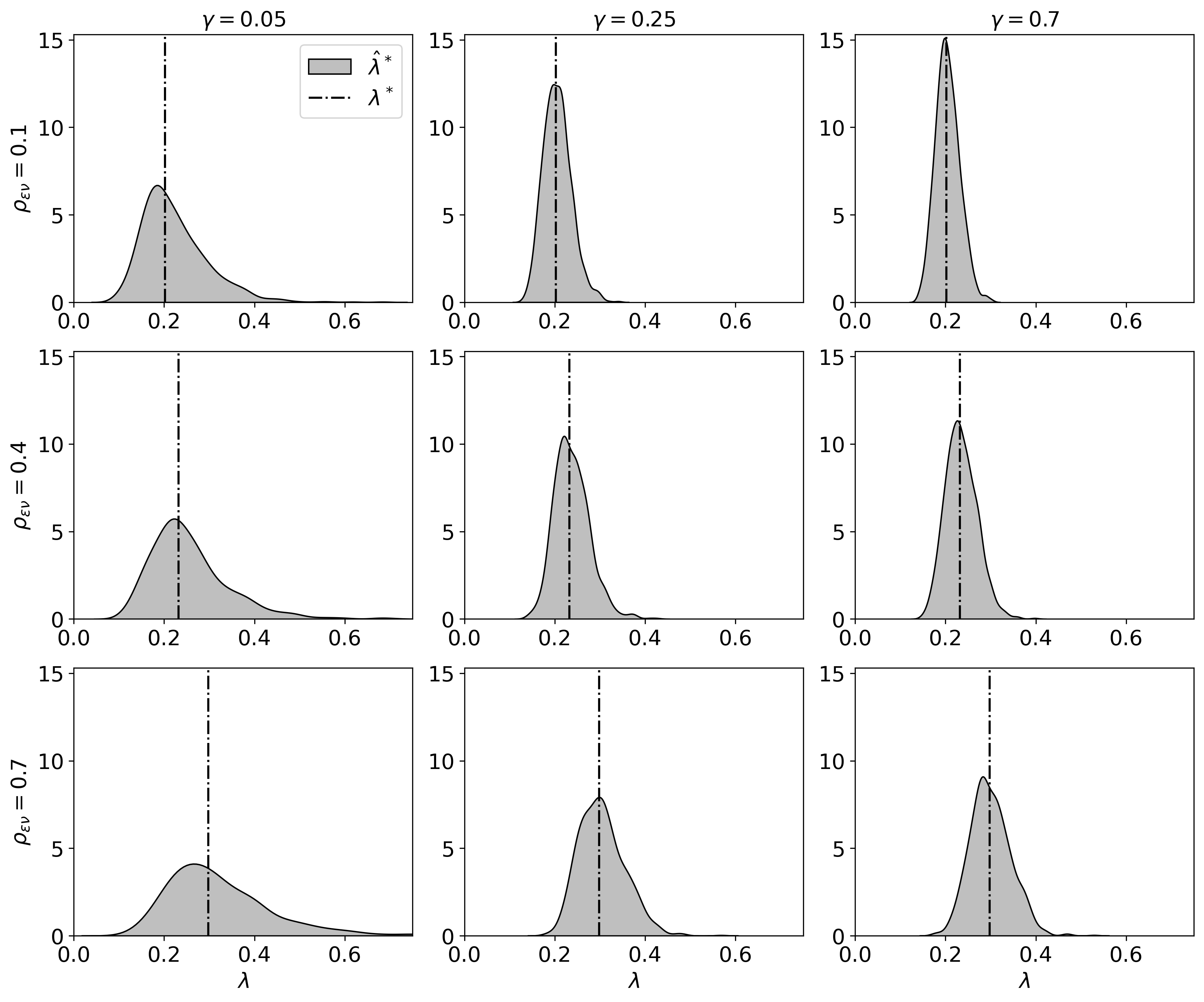}
	\caption{Density plots of $\hat{\lambda}^*$ obtained by estimating $\hat{F}$ and $\hat{\rho}_{\epsilon \nu}$ in a Monte-Carlo simulation with $n=1000$ and $1000$ repetitions, $F=5$, AR-1 model with $\rho_z = 0.5$; Column 1: $\gamma=0.05$; Column 2: $\gamma=0.25$; Column 3: $\gamma=0.7$; Row 1: $\rho_{\epsilon \nu}=0.1$ and $\lambda^*= 0.202$; Row 2: $\rho_{\epsilon \nu}=0.4$ and $\lambda^*= 0.232$; Row 3: $\rho_{\epsilon \nu}=0.7$ and $\lambda^*= 0.298$; Dash-dotted lines show for each setting the theoretical value of $\lambda^*$. \label{fig_lambda_star}}
\end{figure}

\begin{figure}[h!]
	\centering
	\includegraphics[width=0.50\textwidth]{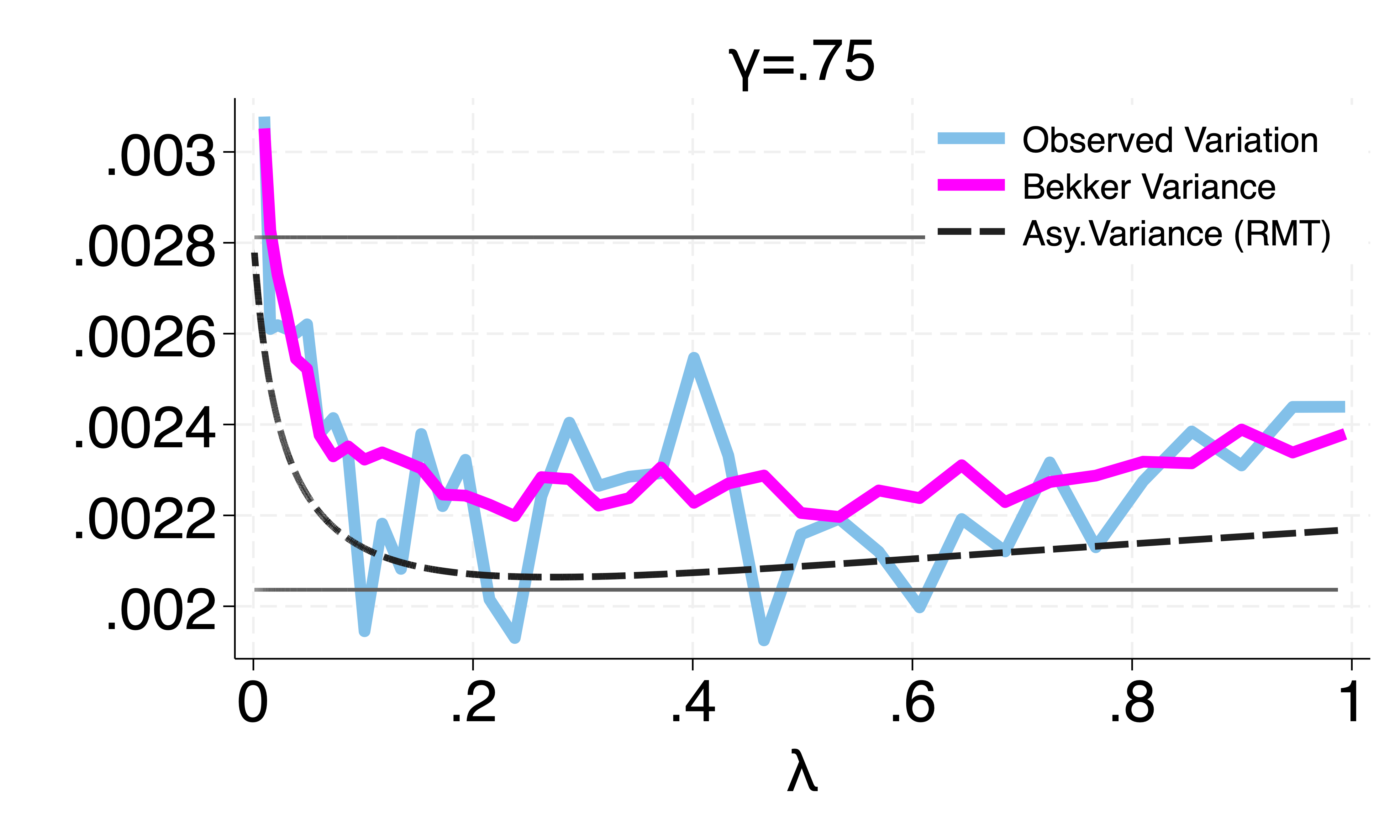}\hfill
	\includegraphics[width=0.50\textwidth]{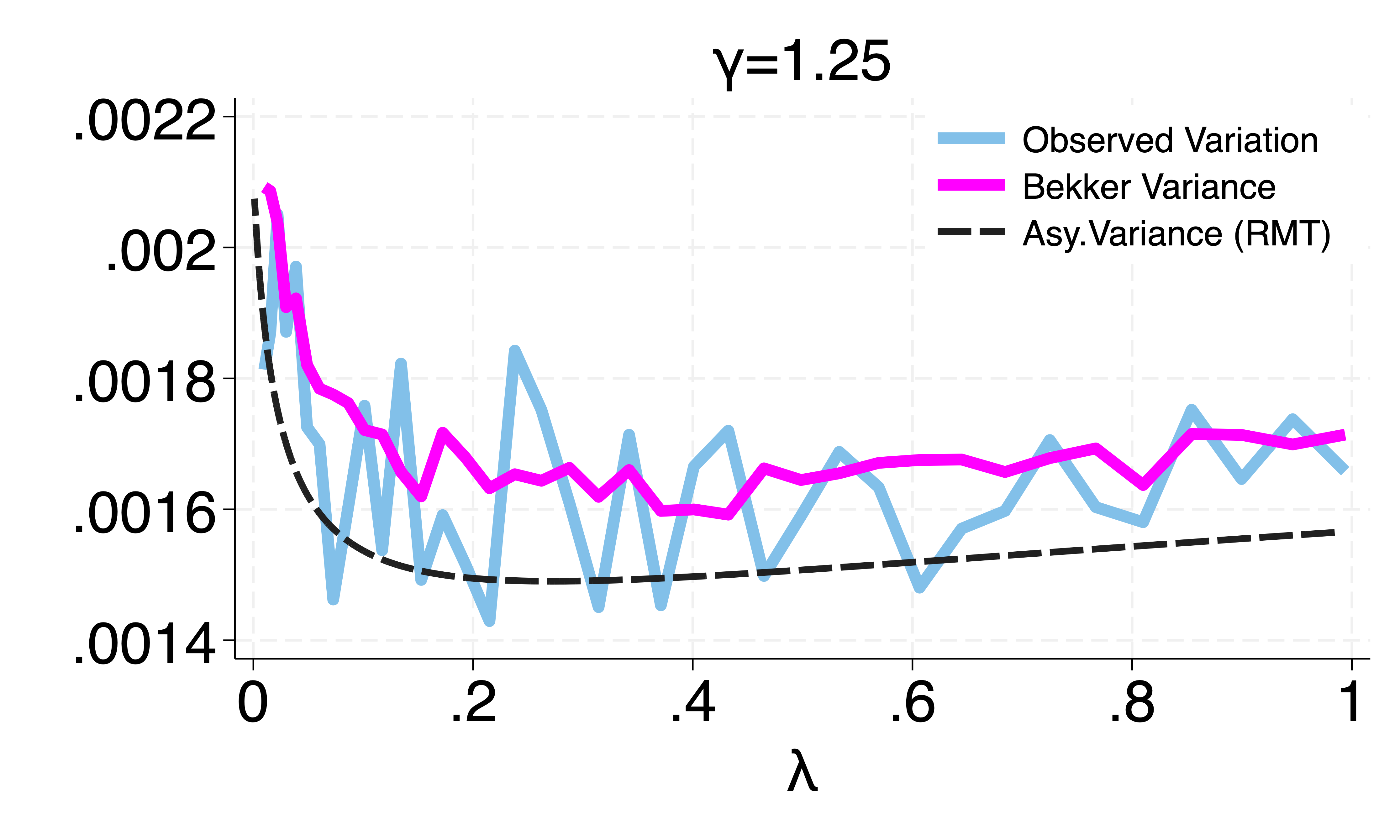}\hfill
	\includegraphics[width=0.50\textwidth]{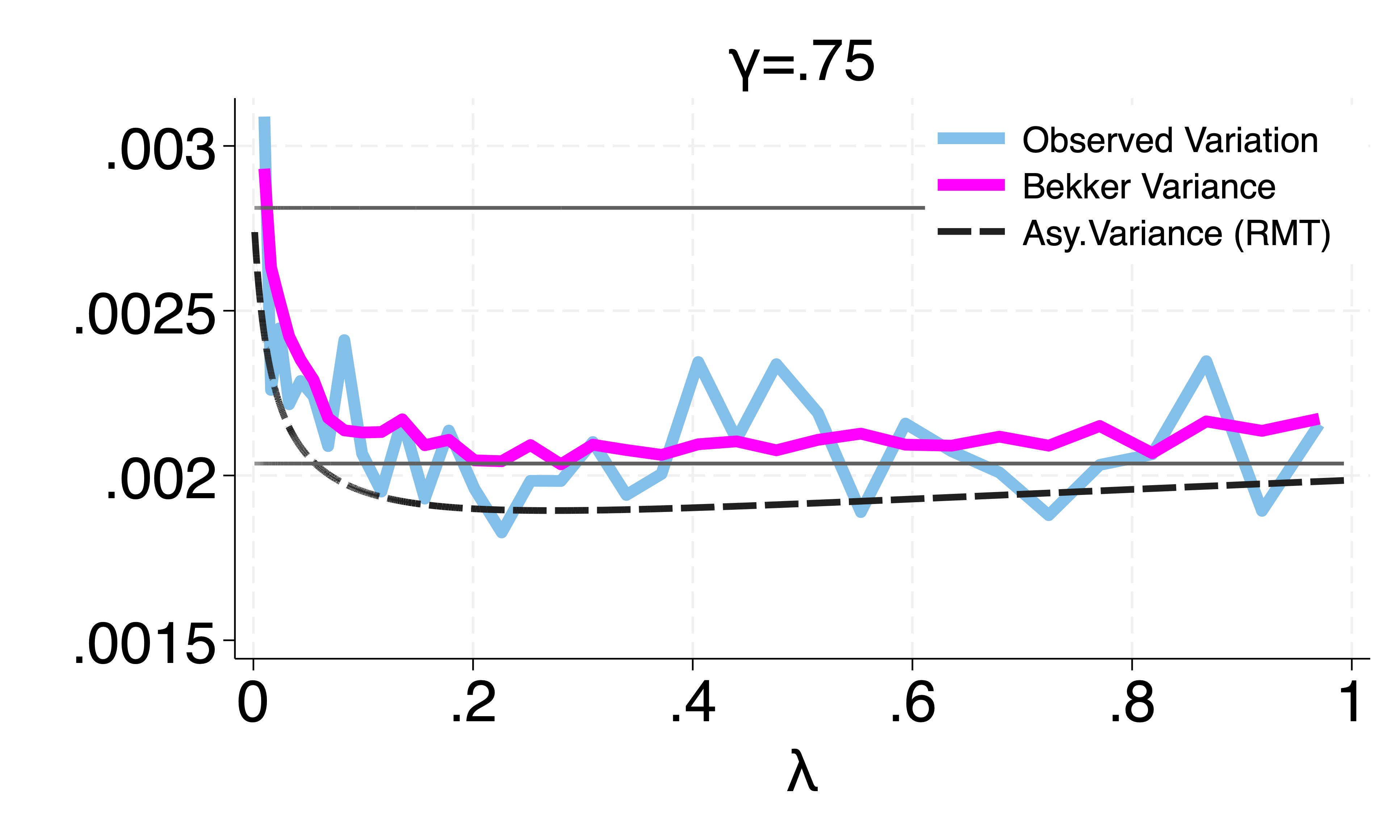}\hfill
	\includegraphics[width=0.50\textwidth]{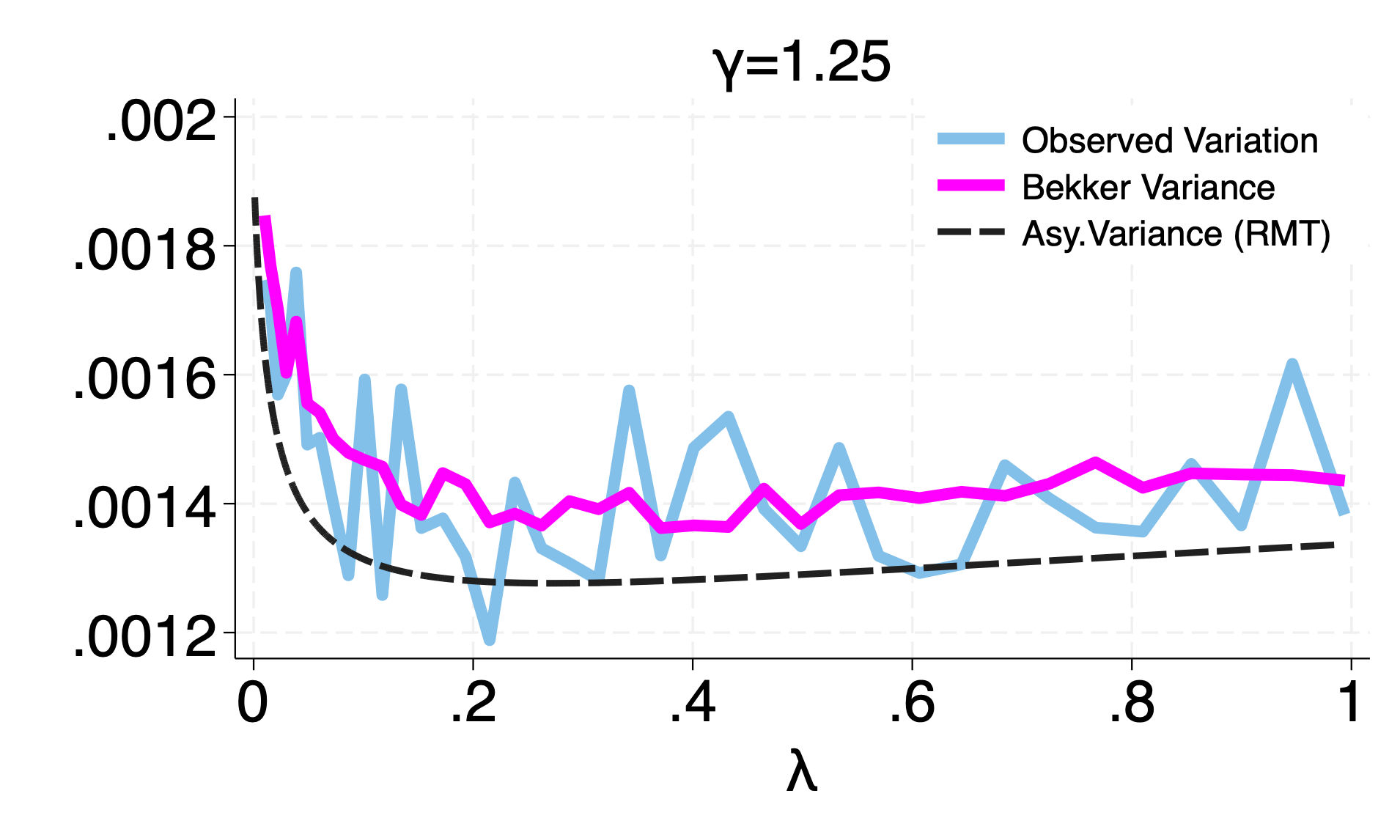}
	\caption{Observed variation and estimated variance of $\widehat\beta_\lambda$ (average over 500 replications) and asymptotic variance (dashed line) for $n=200$. The upper and lower horizontal lines depict the asymptotic variance of the standard bias-adjusted 2SLS and Liml, respectively; top: $\Sigma=I_k$; bottom: AR-1 model $\Sigma_{ij}=0.5^{\vert i-j\vert}$. \label{figappendix_variance}}
\end{figure}

\end{appendices}

\end{document}